\documentclass[aps,pra,twocolumn,superscriptaddress,floatfix,nofootinbib,showpacs,longbibliography]{revtex4-1}

\usepackage[utf8]{inputenc}  
\usepackage[T1]{fontenc}     
\usepackage[british]{babel}  
\usepackage[sc,osf]{mathpazo}\linespread{1.05}  
\usepackage[scaled=0.86]{berasans}  
\usepackage[colorlinks=true, citecolor=blue, urlcolor=blue]{hyperref}
\makeatletter
\newcommand{\setword}[2]{%
  \phantomsection
  #1\def\@currentlabel{\unexpanded{#1}}\label{#2}%
}
\makeatother
\usepackage{graphicx} 
\usepackage[babel]{microtype}  
\usepackage{amsmath,amssymb,amsthm,bm,amsfonts,mathrsfs,bbm} 

\usepackage{xspace}  
\usepackage{pgf,tikz}
\usepackage{xcolor}
\usepackage{multirow}
\usepackage{array}
\usepackage{bigstrut}
\usepackage{braket}
\usepackage{color}
\usepackage{natbib}
\usepackage{multirow}
\usepackage{mathtools}
\usepackage{float}
\usepackage[caption = false]{subfig}
\usepackage{xcolor,colortbl}
\usepackage{color}

\newcommand{\be}{\begin{equation}}
\newcommand{\ee}{\end{equation}}
\newcommand{\ba}{\begin{eqnarray}}
\newcommand{\ea}{\end{eqnarray}}
\newcommand{\ketbra}[2]{|#1\rangle \langle #2|}

\newtheorem{theorem}{Theorem}
\newtheorem{corollary}{Corollary}
\newtheorem{definition}{Definition}
\newtheorem{proposition}{Proposition}

\newtheorem{lemma}{Lemma}






\def\>{\rangle}
\def\<{\langle}







\usepackage{centernot}
\usepackage{subfig}

\begin{document}

\title{Local Quantum State Marking}

\author{Samrat Sen}
\affiliation{School of Physics, IISER Thiruvananthapuram, Vithura, Kerala 695551, India.}

\author{Edwin Peter Lobo}
\affiliation{School of Physics, IISER Thiruvananthapuram, Vithura, Kerala 695551, India.}

\author{Sahil Gopalkrishna Naik}
\affiliation{School of Physics, IISER Thiruvananthapuram, Vithura, Kerala 695551, India.}

\author{Ram Krishna Patra}
\affiliation{School of Physics, IISER Thiruvananthapuram, Vithura, Kerala 695551, India.}

\author{Tathagata Gupta}
\affiliation{Physics and Applied Mathematics Unit, Indian Statistical Institute, 203 B.T. Road, Kolkata 700108, India.}

\author{Subhendu B. Ghosh}
\affiliation{Physics and Applied Mathematics Unit, Indian Statistical Institute, 203 B.T. Road, Kolkata 700108, India.}

\author{Sutapa Saha}
\affiliation{Physics and Applied Mathematics Unit, Indian Statistical Institute, 203 B.T. Road, Kolkata 700108, India.}

\author{Mir Alimuddin}
\affiliation{School of Physics, IISER Thiruvananthapuram, Vithura, Kerala 695551, India.}

\author{Tamal Guha}
\affiliation{Department of Computer Science, The University of Hong Kong, Pokfulam road 999077, Hong Kong.}

\author{Some Sankar Bhattacharya}
\affiliation{International Centre for Theory of Quantum Technologies (ICTQT), University of Gdask, Bazynskiego 8, 80-309 Gdansk, Poland.}

\author{Manik Banik}
\affiliation{School of Physics, IISER Thiruvananthapuram, Vithura, Kerala 695551, India.}

\begin{abstract}
We propose the task of local state marking (LSM), where some multipartite quantum states chosen randomly from a known set of states are distributed among spatially separated parties without revealing the identities of the individual states. The collaborative aim of the parties is to correctly mark the identities of states under the restriction that they can perform only local quantum operations (LO) on their respective subsystems and can communicate with each other classically (CC) -- popularly known as the operational paradigm of LOCC. While mutually orthogonal states can always be marked exactly under global operations, this is in general not the case under LOCC. We show that the LSM task is distinct from the vastly explored task of local state distinguishability (LSD) -- perfect LSD always implies perfect LSM, whereas we establish that the converse does not hold in general. We also explore entanglement assisted marking of states that are otherwise locally unmarkable and report intriguing entanglement assisted catalytic LSM phenomenon.
\end{abstract}


\maketitle	
\section{Introduction} 
Discrimination task, wherein the aim is to distinguish among physical or mathematical objects {\it viz.} states, processes, circuits, probability distributions, is one of the rudimentary steps that appear in information processing protocols, statistical inference, and hypothesis testing \cite{Shannon48,Lehmann05}. Distinct objects or perfectly distinguishable states of a system can be used to store information which assures readability of the information without any ambiguity. Information protocols in the quantum world \cite{Wiesner83,Bennett84,Ekert91,Bennett92,Bennett92(1),Bennett93}, however, are governed by rules that are fundamentally different from our classical worldview. For instance, classical information encoded in non-orthogonal quantum states, either pure or mixed, cannot be perfectly decoded since the no-cloning theorem \cite{Wootters82} ( more generally the no-broadcasting theorem \cite{Barnum96}) puts restriction on their perfect discrimination. Such a constraint is strictly quantum (more precisely, non-classical \cite{Barnum07,Banik19}) in nature as pure classical states are always perfectly distinguishable \cite{Self1}. While a set of mutually orthogonal quantum states can always be distinguished perfectly, interesting situations arise for multipartite quantum systems when discriminating operations among the spatially separated parties holding different subsystems are limited to local quantum operation assisted with classical communication (LOCC). This constitutes the framework for the problem of local state discrimination (LSD) \cite{Bennett99,Walgate00,Ghosh01,Walgate02,Ghosh04,Horodecki03,Watrous05,Hayashi06}. During the last two decades LSD has been studied in great detail resulting in a plethora of interesting conclusions \cite{Bennett99(1),DiVincenzo03,Niset06,Duan07,Calsamiglia10,Bandyopadhyay11,Chitambar14,Halder18,Demianowicz18,Halder19,Halder19(1),Agrawal19,Rout19,Bhattacharya20,Banik20,Rout20} and it also finds applications in useful tasks \cite{Terhal01,DiVincenzo02,Eggeling02,Markham08,Matthews09}. Apart from LSD and more general quantum state discrimination problems \cite{Helstrom69,Holevo73,Yuen75}, several other discrimination tasks, {\it eg.} channel/sub-channel discrimination, process discrimination, circuit discrimination, have been studied during the recent past \cite{Chiribella08,Piani09,Chiribella12,Hirche21} that subsequently motivate several novel information protocols \cite{Pirandola19,Takagi19,Takagi19(1),Chiribella21,Bhattacharya21}.   
In this paper, we introduce a novel variant of discrimination task, which we call local state marking (LSM). A subset of states chosen randomly from a known set of multipartite states is provided to spatially separated parties without revealing the identities of the individual states. The aim is to mark the identities of the states under the operational paradigm of LOCC. For a given set of multipartite states $\mathcal{S}$ one can, in fact, define a class of discrimination tasks denoted by $m$-LSM. Here $1\le m\le |\mathcal{S}|$ with $1$-LSM corresponding to the task of LSD and  the $|\mathcal{S}|$-LSM task we will denote simply as LSM where $|\mathcal{S}|$ is the cardinality of set $\mathcal{S}$. It turns out that the task of LSM is distinct from the task of LSD. In particular, we show that local distinguishability of an arbitrary set of states always implies local markability, but the converse does not always hold true. We provide an example of mutually orthogonal states that are locally markable but not locally distinguishable. We then provide examples of orthogonal states that can neither be distinguished nor marked perfectly under local operations. Some generic implications between $m$-LSM and $m^\prime$-LSM tasks are also analyzed when $m\neq m^\prime$. We then study entanglement assisted local marking of states where additional entanglement is provided as a resource to mark the states that are otherwise locally unmarkable. There we report intriguing entanglement assisted catalytic LSM phenomenon-- a locally unmarkable set of states can be perfectly marked when additional entanglement is supplied as a resource. Interestingly, the entanglement is returned (either partially or completely) once the marking task is done.

\section{Notations and preliminaries} 
A quantum system prepared in a pure state is represented by a vector $\ket{\psi}\in\mathcal{H}$, where $\mathcal{H}$ is the Hilbert space associated with the system. Throughout this work we will consider finite dimensional quantum systems and consequently $\mathcal{H}$ will be isomorphic to some complex Euclidean space $\mathbb{C}^d$. An $N$ partite quantum system is associated with the tensor product Hilbert space $\bigotimes_{i=1}^N\mathbb{C}^{d_i}_{A_i}$, where $\mathbb{C}^{d_i}_{A_i}$ is the Hilbert space of the $i^{th}$ subsystem held by the $i^{th}$ party \cite{Self2,Carcassi21}. A state $\ket{\psi}_{A_1\cdots A_N}\in\bigotimes_{i=1}^N\mathbb{C}^{d_i}_{A_i}$ is called separable across $\mathcal{A}$ {\it vs} $\mathcal{A}^\mathsf{C}$ cut if it is of the form $\ket{\psi}_{A_1\cdots A_N}=\ket{\psi}_{\mathcal{A}}\otimes\ket{\psi}_{\mathcal{A}^\mathsf{C}}$, where $\ket{\psi}_{\mathcal{A}}\in\bigotimes_{i|A_i\in\mathcal{A}}\mathbb{C}^{d_i}_{A_i}$ and $\ket{\psi}_{\mathcal{A}^\mathsf{C}}\in\bigotimes_{i|A_i\in\mathcal{A}^\mathsf{C}}\mathbb{C}^{d_i}_{A_i}$ with $\mathcal{A}$ being a proper nonempty subset of $\mathbb{A}\equiv\{A_1,\cdots,A_N\}$ and $\mathcal{A}^\mathsf{C}\equiv\mathbb{A}\setminus\mathcal{A}$. A multiparty state $\ket{\psi}_{A_1\cdots A_N}$ is fully separable if it is separable across all possible bipartite cuts, {\it i.e.,} $\ket{\psi}_{A_1\cdots A_N}=\bigotimes_{i=1}^N\ket{\psi}_{A_i}$ with $\ket{\psi}_{A_i}\in\mathbb{C}^{d_i}_{A_i}$. For the sake of notational brevity, we will avoid the party index when there is no confusion. A set of quantum states is perfectly distinguishable whenever they are pairwise orthogonal. Moreover, in accordance with the no-cloning theorem \cite{Wootters82}, pairwise orthogonality turns out to be the necessary requirement for perfect distinguishability. 

In the multipartite scenario, when different parts of the quantum systems are held by spatially separated parties, the class of operations LOCC captures the `distant lab' paradigm. Although it is extremely hard to characterize structure of LOCC operations \cite{Chitambar14(1)}, this restricted paradigm plays crucial role to understand the resource of quantum entanglement and it constitutes the scenario for the task of $m$-LSM. 
\begin{definition}\label{def1}
[$m$-LSM] $m$ number of states chosen randomly from a known set of pairwise orthogonal N-party quantum states $\mathcal{S}\equiv\left\{\ket{\psi_j}~|~\langle\psi_i|\psi_j\rangle=\delta_{ij}\right\}\subset\bigotimes_{i=1}^N\mathbb{C}^{d_i}_{A_i}$ are distributed among spatially separated parties without revealing the identity of each state. The $m$-LSM task is to perfectly identify/mark each of the states under the operational paradigm of LOCC.
\end{definition}
In Definition \ref{def1}, $m$ can take values from $1$ to $|\mathcal{S}|$ and accordingly they constitute different discrimination tasks (see Fig.\ref{fig0}). The task of $1$-LSM is more popular as LSD which has been explored in great detail during the last two decades \cite{Bennett99,Walgate00,Ghosh01,Walgate02,Ghosh04,Horodecki03,Watrous05,Hayashi06,Bennett99(1),DiVincenzo03,Niset06,Duan07,Calsamiglia10,Bandyopadhyay11,Chitambar14,Halder18,Demianowicz18,Halder19,Halder19(1),Agrawal19,Rout19,Bhattacharya20,Banik20,Rout20}. The problem of LSD has also been studied with ensembles containing non-orthogonal states \cite{Peres91,Chitambar13}. Similarly, Definition \ref{def1} can also be generalized for such ensembles. In that case the quantity of interest will be the difference between maximum success probabilities of the corresponding marking task under global and local operations, respectively. 
\begin{figure}[t!]
\centering
\includegraphics[width=0.48\textwidth]{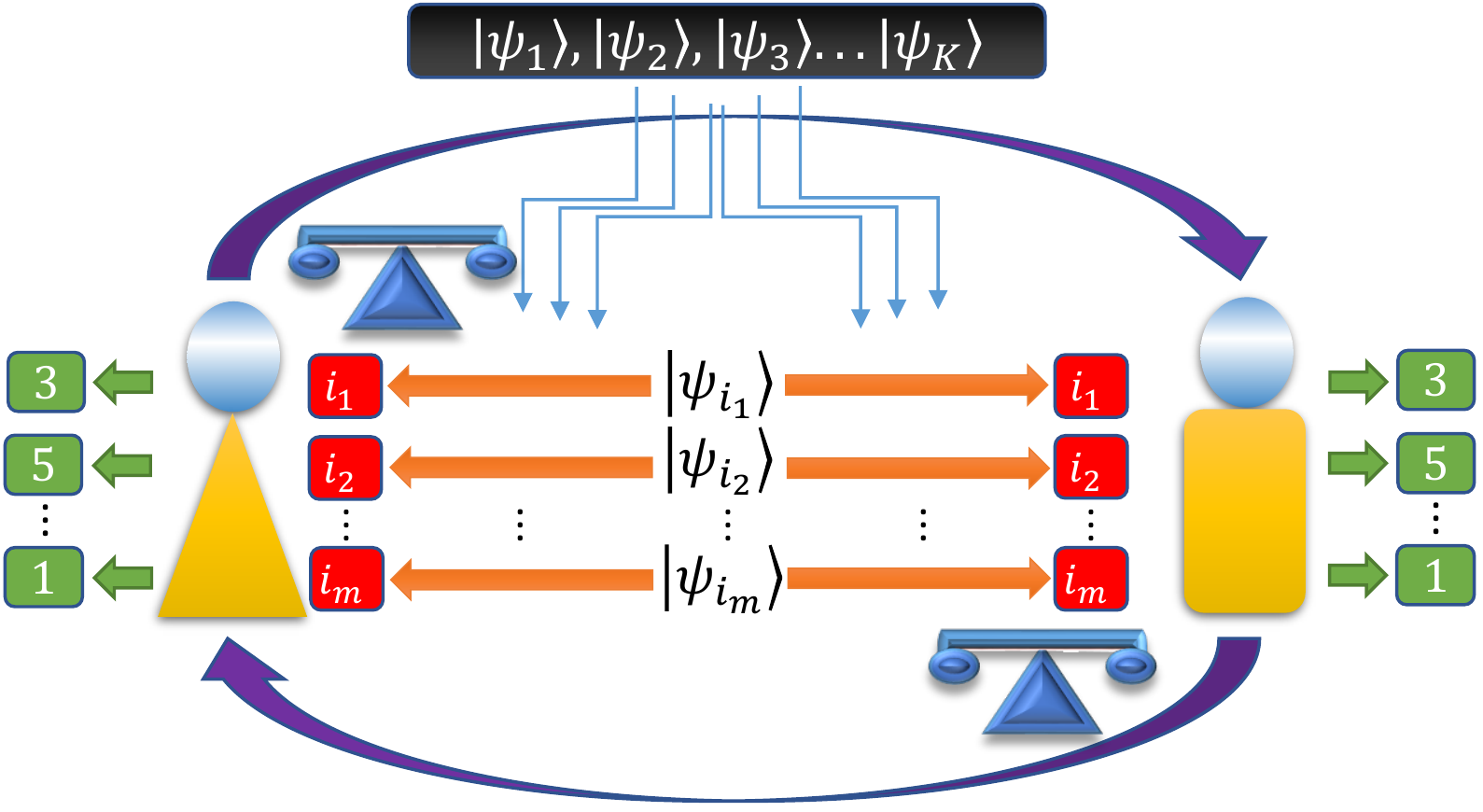}
\caption{(Color online) The task of $m$-LSM is illustrated for the bipartite scenario. $m$ states chosen randomly from a set of $K$ states are distributed between spatially separated Alice and Bob without revealing the identities of the individual states. They have to identify the indices $i_1,\cdots,i_m$ using LOCC. In this particular example the indices are identified to be $(i_1=3,i_2=5,\cdots,i_m=1$). The special case of $m=1$ corresponds to the task of LSD.}\label{fig0}
\end{figure} 
  
\section{Results} 
We will start the technical part of this article by establishing some generic results.
\begin{lemma}\label{lemma1}
For a set of multipartite states $\mathcal{S}$, perfect $(|\mathcal{S}|-2)$-LSM always implies perfect LSM.
\end{lemma}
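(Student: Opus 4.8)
The plan is to reduce the full $|\mathcal{S}|$-LSM task to two successive applications of the assumed $(|\mathcal{S}|-2)$-LSM protocol, run on cleverly overlapping subsets of the distributed systems. Write $K=|\mathcal{S}|$ and label the $K$ distributed ``slots'' (each carrying one of the $K$ states, shared among the parties) by $1,\dots,K$. The first observation I would record is that \emph{any} $K-2$ of these slots constitute a legitimate input for $(|\mathcal{S}|-2)$-LSM: since the $K$ states are distinct and occur exactly once each, any $K-2$ slots carry $K-2$ distinct members of $\mathcal{S}$ in some order. Hence the parties may apply the given protocol to the slots $\{1,\dots,K-2\}$ and thereby learn, with certainty, the identities sitting in those slots.

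After this first run the two states not yet accounted for --- call the unordered pair $\{a,b\}$ --- must occupy slots $K-1$ and $K$, but their order remains undetermined; this leftover ambiguity is exactly a local \emph{pair}-discrimination, which in general cannot be resolved on slots $K-1,K$ alone. The key step is therefore to break the ambiguity by a \emph{second} application of $(|\mathcal{S}|-2)$-LSM to the slot-set $\{1,\dots,K-3,K-1\}$. This set again contains $K-2$ distinct states (the $K-3$ already-identified ones together with one element of $\{a,b\}$), so the protocol applies and returns the identity in slot $K-1$; slot $K$ is then fixed by elimination, completing the marking of all $K$ states.

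The one genuine obstacle is that the second run reuses slots $1,\dots,K-3$, which were already measured in the first run, and for $K\ge 5$ a counting argument shows this overlap is unavoidable (two disjoint blocks of size $K-2$ cannot fit inside $K$ slots once $K>4$; for $K\le 4$ one may instead take disjoint blocks and the issue disappears). Re-preparing the reused slots is not an option, since their states are in general entangled across the parties and entanglement cannot be created under LOCC. The way I would close this gap is to insist that the first $(|\mathcal{S}|-2)$-LSM be carried out \emph{non-destructively}: perfect LOCC discrimination of orthogonal states can always be implemented so that, conditioned on the (now known) outcome, each party holds its share acted on by a local operator, which it can then invert locally to restore the original state. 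Establishing this non-destructive, locally-reversible property --- and checking that the restoration is itself LOCC and leaves the untouched slots $K-1,K$ intact --- is the technical heart of the argument; granting it, the two-run reduction above yields perfect $|\mathcal{S}|$-LSM.
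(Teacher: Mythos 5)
Your first step matches the paper's: run the assumed $(K-2)$-LSM protocol on $K-2$ of the slots and reduce the problem to an unordered pair $\{a,b\}$ sitting in the last two slots. But your proof then goes wrong at the very next sentence. You assert that this leftover ambiguity ``in general cannot be resolved on slots $K-1,K$ alone,'' and this is false: determining which of the two known orthogonal states $\ket{\psi_a},\ket{\psi_b}$ occupies slot $K-1$ is exactly the problem of locally discriminating \emph{two} orthogonal multipartite pure states, and by the theorem of Walgate \emph{et al.} \cite{Walgate00} any two such states can always be perfectly distinguished by LOCC, entangled or not. That single citation is the entire content of the paper's proof beyond the first step: identify slot $K-1$ by Walgate's protocol, and slot $K$ follows by elimination. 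No second run of the $(K-2)$-LSM protocol is needed.

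The machinery you introduce to compensate --- a second $(K-2)$-LSM run on the overlapping slot-set $\{1,\dots,K-3,K-1\}$, justified by the claim that perfect LOCC discrimination ``can always be implemented so that \dots each party holds its share acted on by a local operator, which it can then invert locally'' --- rests on a false premise, so the gap you flag as ``the technical heart of the argument'' cannot be closed. Perfect LOCC discrimination of orthogonal \emph{entangled} states generically destroys their entanglement: to distinguish $\ket{\phi^+}$ from $\ket{\phi^-}$ both parties measure in the computational basis and are left with $\ket{00}$ or $\ket{11}$, a product state, and no local unitaries (indeed no LOCC operation at all) can restore the entangled input. Since the states in $\mathcal{S}$ may be entangled, the already-measured slots $1,\dots,K-3$ cannot in general be returned to a condition in which the $(K-2)$-LSM protocol is guaranteed to run correctly a second time. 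Replace the second run with the Walgate step and the proof collapses to the paper's one-line argument.
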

\begin{proof}
Perfect $(|\mathcal{S}|-2)$-LSM of the set $\mathcal{S}$ implies that given arbitrary $(|\mathcal{S}|-2)$ states from the set, they can be marked locally. So we are left with two more states to identify locally. According to a standard result by Walgate {\it et al.}, any two multipartite pure orthogonal states can be distinguished locally \cite{Walgate00}, which proves our claim. 
\end{proof}
While proof of Lemma \ref{lemma1} follows straightforwardly from the result of Walgate {\it et al.}, in the next we establish a rather nontrivial thesis.  
\begin{theorem}\label{theo1}
For a set of multipartite states $\mathcal{S}$, perfect LSD (i.e. $1$-LSM) always implies perfect LSM (i.e. $|\mathcal{S}|$-LSM).
\end{theorem}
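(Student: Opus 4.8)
The plan is to bootstrap a perfect LSD protocol into a perfect marking protocol, slot by slot, by exploiting the fact that in the full marking instance the parties hold genuinely independent copies of the chosen states.

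First I would fix the LSM instance explicitly. Writing $K=|\mathcal{S}|$, in the $|\mathcal{S}|$-LSM task the $K$ chosen states occupy $K$ distinct slots, and the joint state distributed among the parties is $\bigotimes_{\ell=1}^{K}\ket{\psi_{\sigma(\ell)}}$, where $\sigma$ is an unknown permutation of $\{1,\dots,K\}$ and the $\ell$-th factor is an $N$-partite state whose $k$-th subsystem is held by party $A_k$. Thus each $A_k$ holds $K$ registers, one per slot; the correspondence between a party's $\ell$-th register and slot $\ell$ is fixed and known, and only the identity $\sigma(\ell)$ of each slot is hidden. Marking amounts to recovering $\sigma$. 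Note that the argument below will use only that each slot holds \emph{some} member of $\mathcal{S}$, so it in fact applies verbatim to $m$-LSM for every $m$.

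Next, by the hypothesis of perfect LSD there is an LOCC protocol $\mathcal{L}$ that, acting on a single copy of an arbitrary state of $\mathcal{S}$ distributed among the $N$ parties, returns its correct index with certainty; operationally $\mathcal{L}$ realizes an effective POVM $\{M_j\}_{j=1}^{K}$ obeying $\Tr\!\big(M_i\,\ketbra{\psi_j}{\psi_j}\big)=\delta_{ij}$. The protocol I propose for LSM is simply to run $\mathcal{L}$ on each slot in turn: for $\ell=1,\dots,K$ the parties apply $\mathcal{L}$ to the registers comprising slot $\ell$ and record the outcome as the label of that slot.

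It remains to argue that this procedure is (i) an LOCC protocol and (ii) perfect. Claim (i) holds because a finite concatenation of LOCC protocols is again LOCC, each round of $\mathcal{L}$ acting locally on the parties' slot-$\ell$ registers with only classical communication between rounds. Claim (ii) is the step that requires genuine care, and is where I expect the only real subtlety: because the slots sit in a tensor product, the operation implementing $\mathcal{L}$ on slot $\ell$ acts trivially on every other slot and hence leaves their marginal states unchanged, neither disturbing them nor correlating its outcome with their hidden identities. Consequently the state seen on slot $\ell$ is exactly $\ket{\psi_{\sigma(\ell)}}$ and $\mathcal{L}$ returns $\sigma(\ell)$ with probability $1$; iterating over all slots, the joint success probability factorizes into a product of $1$'s, so every slot is labelled correctly and the marking is perfect. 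The main obstacle is therefore to make this ``no disturbance across slots'' step rigorous, i.e. to verify that acting with the Kraus operators of $\mathcal{L}$ on one tensor factor preserves the marginals of the remaining factors and thus the perfectness of $\mathcal{L}$ on each of them in turn.
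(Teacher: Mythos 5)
Your proposal is correct and is essentially the paper's own argument: the paper reformulates LSM as an LSD problem over the set of permuted tensor products and partitions it into groups $\mathcal{G}_l$, but the operational content is exactly your protocol --- run the perfect LSD protocol on the first slot, observe that the remaining tensor factors are left untouched (the state $\ket{\psi_{l^\star}}\otimes(\cdots)$ evolves to $\ket{\psi'_{l^\star}}\otimes(\cdots)$ with the bracketed part unchanged), and recurse. The ``no disturbance across slots'' step you flag is the same observation the paper relies on, and it holds trivially because the Kraus operators act as $K_i\otimes\mathbb{I}$ on a product state, so each slot presents a pristine copy of some member of $\mathcal{S}$ to the LSD protocol.
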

\begin{proof}
Let the set of states $\mathcal{S}_K\equiv\left\{\ket{\psi_1},\cdots,\ket{\psi_K}\right\}\subset\bigotimes_{i=1}^N\mathbb{C}^{d_i}_{A_i}:=\mathcal{H}$ be locally distinguishable. The problem of LSM for the set $\mathcal{S}_K$ can be reformulated as an LSD problem of the set of states $\mathcal{S}_{\mathcal{P}[\{K\}]}\equiv\left\{\mathcal{P}\left(\otimes_{i=1}^K\ket{\psi_i}\right)\right\}\subset\mathcal{H}^{\otimes K}$, where $\left\{\mathcal{P}\left(\otimes_{i=1}^K\ket{\psi_i}\right)\right\}$ denotes the set of tensor product states generated through permutations of the indices $\{1,\cdots,K\}$. For instance, $\mathcal{S}_{\mathcal{P}[\{3\}]}:=\left\{\mathcal{P}\left(\otimes_{i=1}^3\ket{\psi_i}\right)\right\}\equiv\{\ket{\psi_1\psi_2\psi_3},$ $\ket{\psi_1\psi_3\psi_2},~\ket{\psi_2\psi_3\psi_1},~\ket{\psi_2\psi_1\psi_3},~\ket{\psi_3\psi_2\psi_1},~\ket{\psi_3\psi_1\psi_2}\}$, where $\ket{x~y~z}:=\ket{x}\otimes\ket{y}\otimes\ket{z}$. The states in $\mathcal{S}_{\mathcal{P}[\{K\}]}$ can be expressed group-wise as follows,
\begin{align*}
\mathcal{G}_l:=\ket{\psi_l}\otimes\mathcal{S}_{\mathcal{P}[\{K\}\setminus l]}\equiv\ket{\psi_l}\otimes\left\{\mathcal{P}\left(\otimes_{i\neq l}\ket{\psi_i}\right)\right\},
\end{align*}
where $l\in\{1,\cdots,K\}$. Clearly, the groups $\mathcal{G}_l$ make disjoint partitions of the set $\mathcal{S}_{\mathcal{P}[\{K\}]}$, {\it i.e.,} $\mathcal{S}_{\mathcal{P}[\{K\}]}\equiv \bigcup_{l=1}^K \mathcal{G}_l$ s.t. $\mathcal{G}_l\cap\mathcal{G}_{l^\prime}=\emptyset$ whenever $l\neq l^\prime$. Since the states in $\mathcal{S}_K$ are locally distinguishable, by local operations on the first part of the tensor product states in $\mathcal{S}_{\mathcal{P}[\{K\}]}$ we can know with certainty in which of the above groups the given state lies. If the group turns out to be $\mathcal{G}_{l^\star}$ (i.e., if the index $l$ has been identified to be $l^*$), the given state $\ket{\psi_{l^\star}}\otimes(\cdots)$ evolves to $\ket{\psi^\prime_{l^\star}}\otimes(\cdots)$ due to the LOCC protocol, where the term within the brackets remain unchanged and hence further LOCC protocols can be applied on them. The group of states $\mathcal{G}_{l^\star}=\ket{\psi^\prime_{l^\star}}\otimes\mathcal{S}_{\mathcal{P}[\{K\}\setminus {l^\star}]}$ can be further partitioned into disjoint subsets as,
\begin{align*}
\mathcal{G}_{l^\star}\equiv\bigcup\mathcal{G}_{l^\star,m}~~\mbox{s.t.}~~\mathcal{G}_{l^\star,m}\cap\mathcal{G}_{l^\star,m^\prime}=\emptyset~\forall~m\neq m^\prime\nonumber,\\
\mbox{where}~\mathcal{G}_{l^\star,m}:=\ket{\psi^\prime_{l^\star}}\otimes\ket{\psi_{m}}\otimes\mathcal{S}_{\mathcal{P}[\{K\}\setminus \{l^\star,m\}]},
\end{align*} 
and $m,m^\prime\in\{1,\cdots,K\}\setminus l^\star$. Since any subset of a locally distinguishable set of states is also locally distinguishable, the identity of the index $m$ can be known perfectly by applying some local protocol on the $\ket{\psi_{m}}$ part of the given state. As before, the remaining parts of the state will not change. We can continue this process till we completely determine the identity of the state in $\mathcal{S}_{\mathcal{P}[\{K\}]}$ which in turn marks the state in $\mathcal{S}_{K}$. This completes the proof.  \end{proof}
While Theorem \ref{theo1} deals with the implications between two extreme cases, particularly establishing $1$-LSM $\implies|\mathcal{S}|$-LSM, the following corollaries establish few more nontrivial implications among generic $m$-LSM tasks.  
\begin{corollary}\label{coro1}
For a set of multipartite states $\mathcal{S}$, perfect $m$-LSM always implies perfect $m^\prime$-LSM, where $1\le m\le m^\prime(:=nm)\le|\mathcal{S}|$ with $n\in\mathbb{N}$.
\end{corollary}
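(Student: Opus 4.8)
The plan is to reduce a single instance of $m'$-LSM to $n$ independent instances of $m$-LSM, exploiting the multiplicative relation $m'=nm$. In the $m'$-LSM task the parties are handed $m'$ states drawn from $\mathcal{S}$, each party holding the corresponding subsystem of every one of the $m'$ copies. First I would group the $m'$ positions into $n$ consecutive blocks $B_1,\dots,B_n$, each containing exactly $m$ positions. Because all $m'$ states come from $\mathcal{S}$ and are pairwise orthogonal, the $m$ states populating any block $B_k$ again form a legitimate $m$-LSM instance for the same set $\mathcal{S}$: they are $m$ states of $\mathcal{S}$ distributed across the parties in some order, which is exactly the input that the assumed $m$-LSM protocol is built to handle.

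The core of the argument is that the $n$ blocks occupy mutually disjoint tensor factors of the global Hilbert space $\mathcal{H}^{\otimes m'}$. Since each party physically holds its own subsystems for every position, running the perfect $m$-LSM protocol on $B_1$ amounts to local operations on the $B_1$-subsystems supplemented by classical communication, and hence leaves the subsystems belonging to $B_2,\dots,B_n$ untouched and available for further processing. I would therefore mark $B_1$ first, then $B_2$, and so on, the remaining blocks surviving each round intact. The sequential composition of finitely many LOCC protocols acting on disjoint subsystems is itself an LOCC protocol, so the whole procedure remains within the operational paradigm of Definition \ref{def1}.

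Because perfect $m$-LSM succeeds deterministically, each block $B_k$ is marked without error, and concatenating the $n$ ordered outcomes yields the correct identification of all $m'=nm$ states; this establishes perfect $m'$-LSM. The step needing the most care is the claim that each block is genuinely a valid input for the assumed protocol: one must read ``perfect $m$-LSM'' as the statement that the protocol identifies the states for \emph{every} choice of $m$ states from $\mathcal{S}$ and every ordering, not merely for one fixed subset, so that it works on whatever configuration happens to populate a given block. The hypothesis $nm\le|\mathcal{S}|$ guarantees that such a distribution of states is consistent with the definition. Beyond this bookkeeping, the result follows immediately from the disjointness of the blocks, so I do not expect a substantial technical obstacle.
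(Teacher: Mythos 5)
Your proposal is correct and follows essentially the same route as the paper: partition the $m'=nm$ given states into $n$ disjoint blocks of $m$, run the assumed perfect $m$-LSM protocol on each block in turn, and use the fact that the blocks live on disjoint tensor factors so that earlier rounds leave later blocks intact. The paper states exactly this argument as its ``intuitive'' proof and then merely restates it more formally as an LSD problem over permutation sets (as in its Theorem~\ref{theo1}), so no substantive difference remains; your explicit remark that ``perfect $m$-LSM'' must mean success on \emph{every} $m$-subset and ordering is the same bookkeeping point the paper relies on.
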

\begin{proof}
Given a set $\mathcal{S}_K\equiv\left\{\ket{\psi_1},\cdots,\ket{\psi_K}\right\}\subset\bigotimes_{i=1}^N\mathbb{C}^{d_i}_{A_i}$ is $m$-LSM we are supposed to prove that it is $m^\prime$-LSM, where $m^{\prime} = n m$ with $n\in \mathbb{N}$. Intuitively, the proof goes as follows. Let $\mathcal{L}_m$ be the local protocol that successfully completes the $m$-LSM task for the set $\mathcal{S}_K$. For the $m^\prime$-LSM task, we divide the set of $m^{\prime}$ states into $n$ arbitrary disjoint sets each containing $m$ states. Treating each of these $n$ sets independently, we can mark them locally by following the protocol $\mathcal{L}_m$. Thus, by successively applying the protocol $\mathcal{L}_m$ we can construct the local protocol $\mathcal{L}_{m^\prime}$ for the $m^\prime$-LSM task .
	
We can also reformulate this as an LSD task as was done in Theorem ${\bf 1}$. We begin by noting that from the set $\mathcal{S}_K\equiv\left\{\ket{\psi_1},\cdots,\ket{\psi_K}\right\}\subset\bigotimes_{i=1}^N\mathbb{C}^{d_i}_{A_i}$ one can choose $m$ states in $\prescript{K}{}{C}_{m}$ different ways. Let denote each such choice of states by the set $\mathcal{S}_m^j$ where $j \in \{1,\cdots,\prescript{K}{}{C}_{m}\}$. 	Therefore, $m$-LSM problem of $\mathcal{S}_K$ can be reformulated as the LSD problem of the set of states
\begin{align*}\label{m-lsm}
\mathcal{S}_{\prescript{K}{}{C}_{m}\times m!} &\equiv\bigcup_{j =1}^{\prescript{K}{}{C}_{m}}\mathcal{S}_{\mathcal{P}[\{m\}]}^{j}\\
\mbox{s.t.}~~\mathcal{S}_{\mathcal{P}[\{m\}]}^{j}&\bigcap\mathcal{S}_{\mathcal{P}[\{m\}]}^{j^\prime}=\emptyset~\mbox{for}~j\neq j^\prime,
\end{align*}
where $\mathcal{S}_{\mathcal{P}[\{m\}]}^{j}$ is defined similarly as in Theorem ${\bf 1}$. We are given that perfect $m$-LSM of the the set $\mathcal{S}_K$ is possible, {\it i.e.,} there exists a local protocol $\mathcal{L}_m$ that perfectly distinguishes the states in $\mathcal{S}_{\prescript{K}{}{C}_{m}\times m!}$. While considering the $m^\prime$-LSM problem, or equivalently, the LSD problem of the set $\mathcal{S}_{\prescript{K}{}{C}_{m^\prime}\times m^\prime!}$, the states in $\mathcal{S}_{\mathcal{P}[\{m^\prime\}]}^{j}$ can be expressed group-wise as 	$\mathcal{G}^{j}_{l_1,...,l_{m}}:=\ket{\psi_{l_1},...,\psi_{l_{m}}}\otimes\mathcal{S}_{\mathcal{P}[\{m^\prime\}\setminus {\{l_1,...,l_{m}\}}]}^{j}$ for each value of $j$. Thus the groups $\mathcal{G}^{j}_{l_1,...,l_{m}}$ make a disjoint partition of the the set $\mathcal{S}_{\prescript{K}{}{C}_{m^\prime}\times m^\prime!}$. Since $\mathcal{S}_K$ is $m$-LSM, by  performing local operations on the first $m$-parts of the tensor product states in $\mathcal{S}_{\prescript{K}{}{C}_{m^\prime}\times m^\prime!}$ we can fix the indices $l_1,..,l_{m}$ of $\mathcal{G}^{j}_{l_1,...,l_{m}}$. If $l_1,..,l_{m}$ is identified to be $l^*_1,..,l^*_{m}$ then we know with certainty that the given state lies in $\bigcup\limits_{j}\mathcal{G}^{j}_{l^*_1,...,l^*_{m}}$ and the given state is identified to be of the form $\ket{\psi_{l^*_1},...,\psi_{l^*_{m}}}\otimes(...)$ and evolves to $\ket{\psi'_{l^*_1},...,\psi'_{l^*_{m}}}\otimes(...)$ after the protocol has been performed, where the terms in the brackets remain unchanged and hence further protocols can be performed on that part. The groups $\mathcal{G}^{j}_{l^*_1,...,l^*_{m}} = \ket{\psi'_{l^*_1},...,\psi'_{l^*_{m}}}\otimes\mathcal{S}_{\mathcal{P}[\{m^\prime\}\setminus {\{l_1,...,l_{m}}\}]}^{j}$ can be further partitioned into disjoint subsets as $\mathcal{G}^{j}_{l^*_1,...,l^*_{m}}=\bigcup\limits_{t_1,\cdots,t_{m}}\mathcal{G}^{j}_{l^*_1,...,l^*_{m},t_1,...,t_{m}}$ for each value of $j$,
where $ \mathcal{G}^{j}_{l^*_1,...,l^*_{m},t_1,...,t_{m}} \equiv \ket{\psi'_{l^*_1},...,\psi'_{l^*_{m}}}\otimes\ket{\psi_{t_1},...,\psi_{t_{m}}}\otimes\mathcal{S}_{\mathcal{P}[\{m^\prime\}\setminus {\{l^*_1,...,l^*_{m},t_1,...,t_{m}}\}]}^{j}$. Since $\mathcal{S}_K$ is $m$-LSM, any subset of $\mathcal{S}_K$ is also $m$-LSM. Hence we can further fix the indices $t_1,..,t_{m}$ by performing local operations on the second $m$-parts of the tensor product states in $\bigcup\limits_j\mathcal{G}^{j}_{l^*_1,...,l^*_{m}}$. We continue this process $n$-times which will completely fix all the $nm=m^\prime$ indices. This will also fix the index $j$ since we have completely distinguished the state. This completes the proof. For the special case of $m=1$, LSD (i.e., $1$-LSM) implies perfect $m$-LSM with $1\le m\le|\mathcal{S}|$.
\end{proof}
\begin{corollary}\label{coro2}
For a set of multipartite states $\mathcal{S}$ containing only product states, perfect $m$-LSM always implies perfect $m'$-LSM, where $1\le m\le m^\prime\le|\mathcal{S}|$.
\end{corollary}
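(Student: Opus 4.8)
The plan is to reduce the $m'$-LSM task to a bounded number of applications of the given $m$-LSM protocol $\mathcal{L}_m$, augmented by a padding construction that is available precisely because every state in $\mathcal{S}$ is a product state across the parties. Write $m' = qm + r$ with $0 \le r < m$, and note $q \ge 1$ since $m \le m'$. As in the proofs of Theorem~\ref{theo1} and Corollary~\ref{coro1}, I would reformulate the $m'$-LSM instance as an LSD problem over the ordered tuples of $m'$ pairwise-distinct states from $\mathcal{S}$, so that the distributed state is a tensor product across $m'$ slots, each slot carrying one of the chosen states.

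First I would peel off $q$ disjoint blocks of $m$ slots each and apply $\mathcal{L}_m$ to them one after another, exactly in the manner of Corollary~\ref{coro1}. Because the blocks occupy disjoint tensor factors, acting on one block leaves the states in the remaining slots unchanged (the bracketed parts, in the language of Theorem~\ref{theo1}), so after $q$ rounds the identities of $qm$ of the states are fixed. If $r = 0$ this already completes the task and simply recovers Corollary~\ref{coro1}.

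The essential new ingredient handles the leftover $r$ slots. These carry $r$ unknown states, each guaranteed distinct from the $qm$ already-identified ones since all $m'$ states are pairwise distinct. Here I would invoke the product structure: each party can locally prepare its share of any known state of $\mathcal{S}$, so the parties can jointly, without communication or shared entanglement, prepare $m-r$ fresh copies of known states drawn from the $qm$ identified labels, choosing $m-r$ distinct ones (possible because $qm \ge m > m-r$). Appending these $m-r$ prepared slots to the $r$ unknown slots yields $m$ slots carrying $m$ pairwise-distinct states of $\mathcal{S}$, i.e. a legitimate $m$-LSM input in the sense of Definition~\ref{def1}. Applying $\mathcal{L}_m$ once more then marks all $m$ of them, in particular revealing the identities of the $r$ remaining unknowns, which finishes the $m'$-LSM task.

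The step I expect to be the crux, and the only place the product hypothesis enters, is this padding: local preparation of the $m-r$ known states is what lets us top up an undersized remainder into a valid $m$-LSM instance. For product states it is immediate, whereas for entangled members of $\mathcal{S}$ the parties could not locally synthesize the required known states and the argument would break down, which is consistent with the fact that Corollary~\ref{coro1} (with no product assumption) reaches only the multiples $m' = nm$. A minor point to check carefully is the distinctness bookkeeping, ensuring that the $r$ unknowns together with the $m-r$ padding states always constitute $m$ mutually orthogonal elements of $\mathcal{S}$ so that $\mathcal{L}_m$ is applied to an admissible input.
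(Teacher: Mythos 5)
Your proof is correct and rests on exactly the same key idea as the paper's: the product hypothesis is used solely to let the parties locally re-prepare already-identified states of $\mathcal{S}$ so that an undersized batch of unknowns can be topped up into an admissible input for $\mathcal{L}_m$. The paper organizes this as a one-step induction ($m$-LSM $\Rightarrow(m+1)$-LSM, re-creating the first $m$ states after the protocol destroys them and re-running $\mathcal{L}_m$ on a window shifted by one slot), whereas you run $q$ disjoint blocks and pad the final remainder of $r$ unknowns in a single step; this is only an organizational difference, and your distinctness bookkeeping ($m-r\le qm$, with the padding labels automatically distinct from the $r$ unknowns) is sound.
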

\begin{proof}
It is sufficient to show that $m$-LSM implies $(m+1)$-LSM for any set $\mathcal{S}$ containing product states only. Given $(m+1)$-states to be marked, we begin by marking the first $m$-states by using the protocol for $m$-LSM. However, during this process the first $m$-states are destroyed. But since we have determined the identity of the first $m$-states, we can locally create the original set of first $m$-states once again. It is to be noted that we can locally create any set of multipartite states whose identity is known if and only if the set contains product states only. We now run the protocol for $m$-LSM once again but this time on the last $m$-states. Thus we have identified all the $(m+1)$-states. This completes the proof. Reformulation of this proof in terms of the LSD problem is straightforward and follows in a similar fashion as the proof of Corollary ${\bf 1}$.
\end{proof}
We note in passing that $m$-LSM does not trivially imply $(m-1)$-LSM for product states. In the $m$-LSM task, $m$ states are accessible to the parties in order to mark their identities. However, for $(m-1)$-LSM, the number of accessible states reduce to $(m-1)$ and no trivial inferences can be drawn. Although we were unable to prove for product states, whether m-LSM implies (m-1)-LSM or not, if this really were the case, then the consequences would be very exciting. Using the contrapositive of the statement that m-LSM implies (m-1)-LSM, we would have thus given a protocol for constructing locally indistinguishable states in higher dimensions starting from states like Bennett's UPBs for which perfect LSD is not possible.

Our next result, however, establishes that the converse statement of Theorem \ref{theo1} does not hold in general and we show this by providing an explicit example wherein given a set of entangled states, known to be locally indistinguishable, the task of LSM is still  possible, and that too with a substantial amount of surplus entanglement shared between the distant parties at the end of the protocol.
\begin{theorem}\label{theo2}
Perfect LSM of a given set of states $\mathcal{S}$ does not necessarily imply perfect LSD of $\mathcal{S}$.
\end{theorem}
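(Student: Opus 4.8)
The statement is an existence claim, so the plan is to prove it by exhibiting an explicit set $\mathcal{S}$ of pairwise orthogonal \emph{entangled} states for which perfect LSM is achievable but perfect LSD is provably impossible. I would organise everything around the reformulation used in the proof of Theorem~\ref{theo1}: marking $\mathcal{S}$ with $|\mathcal{S}|=K$ is equivalent to locally distinguishing the permuted-product family $\mathcal{S}_{\mathcal{P}[\{K\}]}$ of all $\mathcal{P}(\otimes_{i=1}^{K}\ket{\psi_i})$ living on $\mathcal{H}^{\otimes K}$. Thus the target is a set $\mathcal{S}$ that, as a single copy, is \emph{not} locally distinguishable, yet whose associated permuted-product family $\mathcal{S}_{\mathcal{P}[\{K\}]}$ \emph{is} locally distinguishable. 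The conceptual point to stress is that a marking instance presents each of the $K$ distinct states exactly once, so the parties hold several correlated entangled subsystems at the same time; this auxiliary entanglement, absent from a lone copy in LSD, is the resource that will be exploited.

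For the construction I would take $\mathcal{S}$ to be a carefully chosen small family of orthogonal maximally entangled (Bell-type) two-qubit states shared by Alice and Bob, the canonical instance of an orthogonal set whose single-copy local distinguishability fails once $|\mathcal{S}|\ge 3$. First I would settle the LSD half independently of any marking protocol, by invoking the standard impossibility of perfectly distinguishing three (or more) orthogonal Bell states under LOCC, and by noting that each state has the same maximally mixed local reduction, so neither party extracts any information acting unilaterally.

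The substantive part is the marking protocol, where I would give an explicit adaptive LOCC procedure on the $K$ registers. The two tools are: bilateral Clifford operations (bilateral CNOTs and Hadamards), which map a product of Bell states to a product of Bell states and act as reversible linear maps on the Bell labels, and the observation that once a register's identity has been fixed it becomes a \emph{known} maximally entangled resource usable, e.g.\ for teleportation, to assist the identification of the remaining registers. Combining a Clifford preprocessing step that redistributes the label information with a sequence of local measurements and classical communication, the parties pin down the permutation one register at a time. Crucially, I would engineer the schedule so that at least one maximally entangled register is never measured, so that a substantial amount of shared entanglement survives at the end, which is the claimed surplus. The verification then has two parts: that every step is LOCC, and that the measurement record determines the permutation with certainty, i.e.\ that $\mathcal{S}_{\mathcal{P}[\{K\}]}$ is perfectly distinguished.

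The hard part will be the protocol itself together with reconciling its success with the single-copy impossibility: I must show that simultaneous access to all $K$ distinct states genuinely circumvents the per-pair, conjugate-basis bottleneck that blocks LSD (a single Bell pair yields information in only one of two complementary bases), while still remaining within LOCC and still leaving entanglement intact. The delicate issue is routing: a naive register-by-register strategy can destroy the conjugate information of a pair before it is needed, leaving an unresolvable two-fold ambiguity, so the bilateral-Clifford stage must be designed precisely to forestall this and to ensure that the surviving (surplus) register is exactly one whose label is never required for read-out. Establishing \emph{perfect}, not merely high-probability, identification, and confirming the leftover shared entanglement, completes the proof.
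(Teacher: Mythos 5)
Your overall strategy --- reduce LSM of $\mathcal{S}$ to LSD of the permuted-product family $\mathcal{S}_{\mathcal{P}[\{K\}]}$, exhibit a set that is locally indistinguishable as a single copy but whose permuted-product family is locally distinguishable, and exploit already-identified registers as teleportation resources while keeping surplus entanglement --- matches the shape of the paper's argument. But your concrete choice of witness is fatally flawed. You propose to take $\mathcal{S}$ to be a family of orthogonal two-qubit Bell states with $|\mathcal{S}|\ge 3$. If you take the full Bell basis $\mathcal{B}_4$, the paper's own Proposition \ref{prop1} shows it is locally \emph{unmarkable}: $\mathcal{B}_{\mathcal{P}[\{4\}]}$ consists of $4!=24$ pairwise orthogonal maximally entangled states in $\mathbb{C}^{16}\otimes\mathbb{C}^{16}$, and by the result of Hayashi \emph{et al.} more than $d$ orthogonal maximally entangled states in $\mathbb{C}^{d}\otimes\mathbb{C}^{d}$ can never be perfectly distinguished by LOCC; since $24>16$, no amount of bilateral-Clifford preprocessing or clever routing can make the protocol succeed. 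If instead you take three Bell states $\mathcal{B}_3$, the counting bound no longer applies ($3!=6<2^3=8$), but Proposition \ref{prop4} shows $\mathcal{B}_{\mathcal{P}[\{3\}]}$ is not distinguishable under one-way LOCC, and the paper explicitly leaves open whether any two-way LOCC protocol marks $\mathcal{B}_3$ without catalytic entanglement. So for neither choice can you deliver the LSM half of the claim: in the first case it is provably false, and in the second it would amount to resolving an open problem for which you have supplied no protocol (you yourself defer the ``hard part'').

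The paper sidesteps this by choosing a set with more internal structure: four ququad-ququad states, each a tensor product of two Bell pairs, namely $\ket{\chi_1}=\ket{\phi^+}\otimes\ket{\phi^+}$, $\ket{\chi_2}=\ket{\phi^-}\otimes\ket{\phi^-}$, $\ket{\chi_3}=\ket{\psi^+}\otimes\ket{\phi^-}$, $\ket{\chi_4}=\ket{\psi^-}\otimes\ket{\phi^-}$. Local indistinguishability of this set is \emph{not} a counting statement; it is the nontrivial result of Yu \emph{et al.} that these four states cannot be distinguished even by PPT POVMs. Markability, by contrast, exploits the product structure: correlated versus anti-correlated Pauli-$Z$ outcomes on the first Bell factor split the set into $\{\ket{\chi_1},\ket{\chi_2}\}$ versus $\{\ket{\chi_3},\ket{\chi_4}\}$ while leaving the second factor untouched, and the surviving known Bell pairs then serve as teleportation resources and feed the two-state Walgate step, exactly in the spirit you describe, leaving $3$ ebits on average. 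Your high-level mechanism is sound, but it must be mounted on a set whose single-copy indistinguishability does \emph{not} follow from the maximally-entangled counting bound, because that same bound kills the permuted-product family. As written, your proposal does not produce a valid witness and the proof does not go through.
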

\begin{proof}
The proof is constructive. We provide a set of pairwise orthogonal states that can be perfectly marked under LOCC but does not allow perfect local distinguishability. To this aim consider the set of states $\mathcal{X}_4\equiv\{\ket{\chi_i}\}_{i=1}^4\subset\mathbb{C}^4_A\otimes\mathbb{C}^4_B$ shared between Alice and Bob, where  
\small
\begin{align*}
\ket{\chi_1}&:=\ket{\phi^+}_{A_1B_1}\otimes\ket{\phi^+}_{A_2B_2},~\ket{\chi_2}:=\ket{\phi^-}_{A_1B_1}\otimes\ket{\phi^-}_{A_2B_2},\\
\ket{\chi_3}&:=\ket{\psi^+}_{A_1B_1}\otimes\ket{\phi^-}_{A_2B_2},~\ket{\chi_4}:=\ket{\psi^-}_{A_1B_1}\otimes\ket{\phi^-}_{A_2B_2},\\
\mbox{with}&~~\ket{\phi^\pm}:=\frac{\ket{00}\pm\ket{11}}{\sqrt{2}},~~~\ket{\psi^\pm}:=\frac{\ket{01}\pm\ket{10}}{\sqrt{2}},
\end{align*}
\normalsize
and $A_1,A_2$ subsystems are with Alice while $B_1,B_2$ are with Bob. The part of $\ket{\chi_i}$ indexed with $A_1B_1$ we will call the first part and the part with index $A_2B_2$ will be the second part. For LSM, Alice and Bob are provided the state $\ket{\chi_p}\otimes\ket{\chi_q}\otimes\ket{\chi_r}\otimes\ket{\chi_s}\in\left(\mathbb{C}^4_A\otimes\mathbb{C}^4_B\right)^{\otimes4}$, without specifying the indices $p,q,r,s\in\{1,\cdots,4\}$ and $p, q, r, s$ are all distinct. Their collaborative aim is to identify the indices where the collaboration is restricted to LOCC. It turns out that there exists a local strategy that marks the states exactly (detailed protocol is provided in Appendix \ref{appendix}). Now, local indistinguishability of the set $\mathcal{X}_4$ follows from the results of Yu {\it et al} \cite{Yu12}. There the authors have proved that states in $\mathcal{X}_4$ cannot be distinguished perfectly by any PPT POVM, a larger class of operations that strictly contains all LOCC operations. This completes the proof.
\end{proof}
It turns out that at the end of local marking strategy described in Theorem \ref{theo2} $3$-ebit of entanglement (on average) remains between Alice and Bob (see Appendix \ref{appendix}). However, the optimality of the protocol in terms of retaining entanglement remains an open problem. 

So far we have considered sets that are locally markable. Our next result provides an example of a set of mutually orthogonal states that cannot be marked perfectly under LOCC.
\begin{proposition}\label{prop1}
The two qubit Bell basis $\mathcal{B}_4\equiv\{\ket{b_1}:=\ket{\phi^+},\ket{b_2}:=\ket{\phi^-},\ket{b_3}:=\ket{\psi^+},\ket{b_4}:=\ket{\psi^-}\}\subset\mathbb{C}^2\otimes\mathbb{C}^2$ is locally unmarkable. 
\end{proposition}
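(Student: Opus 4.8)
The plan is to prove that the Bell basis $\mathcal{B}_4$ is locally unmarkable by reducing the marking task to a distinguishability question that is provably impossible under LOCC. By the reformulation established in Theorem \ref{theo1}, perfect LSM of $\mathcal{B}_4$ is equivalent to perfect LSD of the permutation set $\mathcal{S}_{\mathcal{P}[\{4\}]}$, consisting of the $4!=24$ tensor-product states $\mathcal{P}(\ket{b_1}\otimes\ket{b_2}\otimes\ket{b_3}\otimes\ket{b_4})$ living in $(\mathbb{C}^2\otimes\mathbb{C}^2)^{\otimes 4}$. Since each of these product states is a tensor product of four maximally entangled two-qubit states, the whole system carries $4$ ebits of entanglement across the Alice--Bob cut regardless of which permutation occurs, and the marking problem is exactly the problem of reading off \emph{which} permutation of the four Bell states has been dealt.

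First I would identify the obstruction as an entropic/entanglement no-go argument rather than attempt a direct analysis of a general LOCC protocol. The key observation is that perfectly marking $\mathcal{B}_4$ is at least as hard as perfectly marking any two of its elements while the other two are held fixed, and in particular it requires distinguishing $\ket{b_i}\otimes\ket{b_j}$ from $\ket{b_j}\otimes\ket{b_i}$ locally. I would therefore isolate a sub-task: show that at some stage the parties must locally distinguish a pair of orthogonal two-qubit Bell states appearing in a single copy, which is the classic Bell-state LSD problem. Since a single pair of Bell states cannot be perfectly distinguished by LOCC without consuming entanglement (a one-ebit resource is provably necessary by the teleportation-based lower bound and the data-processing inequality for distillable entanglement), and since the marking protocol is afforded no external entanglement, perfect marking cannot succeed. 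The cleanest way to formalize this is to track the maximally entangled fraction (or distillable entanglement) across the cut and argue that any LOCC operation that fully extracts the permutation label would amount to perfectly discriminating an orthogonal Bell pair locally, contradicting the Bell-state indistinguishability result of Ghosh \emph{et al.}

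The main obstacle I anticipate is ruling out cleverer global LOCC strategies that do not proceed state-by-state: in principle the parties might jointly measure across all four copies and extract partial information that collectively pins down the permutation without ever fully resolving any single Bell pair. To close this gap I would argue by contradiction, supposing a perfect marking protocol exists, and then use it as a subroutine to construct a perfect local discrimination protocol for a set of orthogonal Bell states that is known to be locally indistinguishable — thereby deriving the contradiction regardless of how the original protocol is structured. A complementary and perhaps more direct route is to invoke the monotonicity of entanglement under LOCC: a deterministic perfect marking of $\mathcal{B}_4$ would let Alice and Bob sort the four encoded maximally entangled states into a known ordering, which is impossible to do perfectly while respecting the constraint that LOCC cannot increase entanglement nor perfectly extract orthogonal-Bell-state labels locally. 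I would present the reduction-to-Bell-state-LSD argument as the backbone, with the entanglement-monotone bookkeeping supplying the quantitative no-go, and defer any detailed tracking of intermediate states to a short appendix-style calculation.
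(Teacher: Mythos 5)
Your reformulation of LSM of $\mathcal{B}_4$ as LSD of the $24$ permutation states is the right first step and matches the paper, but the obstruction you then build on is not a real one. You assert that perfect marking forces the parties, at some stage, to locally distinguish a pair of orthogonal Bell states in a single copy, and that this ``is the classic Bell-state LSD problem'' requiring one ebit. That premise is false: by Walgate \emph{et al.}~\cite{Walgate00}, \emph{any} two orthogonal multipartite pure states --- in particular any two Bell states, or $\ket{b_i}\otimes\ket{b_j}$ versus $\ket{b_j}\otimes\ket{b_i}$ --- are perfectly LOCC-distinguishable. The Ghosh \emph{et al.} one-ebit lower bound~\cite{Ghosh01} applies to discriminating the \emph{full four-element} Bell basis from a single copy, not to pairs, so the backbone of your argument collapses. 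Your fallback reduction (run a hypothetical perfect marking protocol as a subroutine to solve a known locally indistinguishable Bell-state LSD instance) is never constructed, and it is not clear it can be: in the marking task Alice and Bob receive one copy of \emph{each} of the four Bell states, a strictly richer input than a single unknown Bell state, so there is no black-box reduction to single-copy Bell-basis LSD. The entanglement-monotonicity bookkeeping does not close the gap either: LOCC can perfectly well extract classical labels while destroying entanglement, and Theorem~\ref{theo2} of the paper exhibits a set of four maximally entangled $4\otimes4$ states that \emph{is} perfectly locally markable, so no argument of the form ``sorting maximally entangled states violates monotonicity'' can be correct as stated.

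The missing ingredient is quantitative. Each of the $24$ states $\ket{b_p}\otimes\ket{b_q}\otimes\ket{b_r}\otimes\ket{b_s}$ is itself a maximally entangled state of $\mathbb{C}^{16}\otimes\mathbb{C}^{16}$ across the Alice--Bob cut (a tensor product of maximally entangled states is maximally entangled), and the $24$ states are pairwise orthogonal. The paper then invokes the bound of Hayashi \emph{et al.}~\cite{Hayashi06}: $n$ pairwise orthogonal maximally entangled states in $\mathbb{C}^{d}\otimes\mathbb{C}^{d}$ cannot be perfectly distinguished under LOCC whenever $n>d$. Since $24>16$, perfect LSD of the permutation set, and hence perfect LSM of $\mathcal{B}_4$, is impossible. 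If you wish to keep an entanglement-flavoured no-go, this counting bound is the precise form it must take; the qualitative appeals to Bell-pair indistinguishability and to monotonicity do not suffice.
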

\begin{proof}
LSM of $\mathcal{B}_4$ is equivalent to  LSD of the set $\mathcal{B}_{\mathcal{P}[\{4\}]}$ that contains $24$ pairwise orthogonal maximally entangled states in $\mathbb{C}^{16}\otimes\mathbb{C}^{16}$. Therefore, the desired thesis follows from the fact that $n$ pairwise orthogonal maximally entangled states in $\mathbb{C}^{d}\otimes\mathbb{C}^{d}$ cannot be perfectly distinguished locally  whenever $n>d$ \cite{Hayashi06}. 
\end{proof}
Furthermore, in accordance with Lemma \ref{lemma1}, the set $\mathcal{B}_4$ does not allow perfect $2$-LSM and it also straightforwardly follows that perfect $3$-LSM of $\mathcal{B}_4$ is impossible (in fact, for any set $\mathcal{S}$, $(|\mathcal{S}|-1)$-LSM always implies $|\mathcal{S}|$-LSM). A generalization of Proposition \ref{prop1} follows arguably. 
\begin{proposition}\label{prop2}
Consider any set of maximally entangled states $\mathcal{B}_K(d):=\left\{\ket{b_i}~|~\langle b_i|b_j\rangle=\delta_{ij}\right\}_{i=1}^K\subset\mathbb{C}^d\otimes\mathbb{C}^d$. The set is locally unmarkable whenever $K!>d^K$. 	
\end{proposition}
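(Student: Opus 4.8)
The plan is to mirror the reduction used in Proposition \ref{prop1} and then invoke an appropriate dimension-counting bound for maximally entangled states. First I would observe that, exactly as in the proof of Theorem \ref{theo1} and Proposition \ref{prop1}, the LSM task for $\mathcal{B}_K(d)$ is equivalent to the LSD task for the permutation set $\mathcal{B}_{\mathcal{P}[\{K\}]}$, which consists of all $K!$ distinct tensor-product states $\mathcal{P}\left(\otimes_{i=1}^K\ket{b_i}\right)$ living in $\left(\mathbb{C}^d\otimes\mathbb{C}^d\right)^{\otimes K}\cong\mathbb{C}^{d^K}\otimes\mathbb{C}^{d^K}$. Since all $\ket{b_i}$ are maximally entangled across the $A$--$B$ cut, each permuted tensor product is again maximally entangled across the global Alice-versus-Bob cut, so $\mathcal{B}_{\mathcal{P}[\{K\}]}$ is a set of exactly $K!$ pairwise-orthogonal maximally entangled states in a bipartite system of local dimension $d^K$.

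Next I would apply the standard no-go result for local discrimination of maximally entangled states cited already in Proposition \ref{prop1}: $n$ pairwise-orthogonal maximally entangled states in $\mathbb{C}^{D}\otimes\mathbb{C}^{D}$ cannot be perfectly distinguished by LOCC whenever $n>D$ \cite{Hayashi06}. Here we have $n=K!$ states and local dimension $D=d^K$, so the hypothesis $K!>d^K$ is precisely the condition $n>D$. Hence $\mathcal{B}_{\mathcal{P}[\{K\}]}$ is not locally distinguishable, which by the equivalence established above means $\mathcal{B}_K(d)$ is not locally markable. This yields the claim and recovers Proposition \ref{prop1} as the case $d=2$, $K=4$, where $4!=24>16=2^4$.

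The one place requiring genuine care — and the step I would expect to be the main obstacle — is verifying that each permuted state $\mathcal{P}\left(\otimes_{i=1}^K\ket{b_i}\right)$ is genuinely maximally entangled across the single bipartite cut separating \emph{all} of Alice's subsystems from \emph{all} of Bob's subsystems, since the Hayashi \emph{et al.} bound is stated for that one cut. Because each $\ket{b_i}$ has maximally mixed marginals of rank $d$ on both sides, the tensor product of $K$ of them has reduced state $\mathbbm{1}_{d^K}/d^K$ on Alice's composite system regardless of the ordering, so the Schmidt rank across the global cut is $d^K=D$ for every permutation; this is exactly maximal entanglement in $\mathbb{C}^{D}\otimes\mathbb{C}^{D}$, so the bound applies verbatim. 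I would also note that the $K!$ permuted products are pairwise orthogonal whenever the original $\ket{b_i}$ are distinct members of an orthonormal set, so the count $n=K!$ is exact and no collapsing of states occurs.
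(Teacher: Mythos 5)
Your proposal is correct and follows exactly the route the paper intends: Proposition~\ref{prop2} is stated there with only the remark that it ``follows arguably'' as a generalization of Proposition~\ref{prop1}, i.e.\ by reducing LSM of $\mathcal{B}_K(d)$ to LSD of the $K!$ pairwise-orthogonal maximally entangled permutation states in $\mathbb{C}^{d^K}\otimes\mathbb{C}^{d^K}$ and invoking the Hayashi \emph{et al.} bound $n>D$. Your added verification that each permuted product has maximally mixed marginal $\mathbbm{1}_{d^K}/d^K$ across the global Alice-versus-Bob cut is exactly the detail the paper leaves implicit, and it is argued correctly.
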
  
We now move on to the possibility of entanglement assisted marking of states that otherwise are locally unmarkable.
It might happen that given $\delta$-ebit of entanglement some LSM task can be performed exactly which is otherwise impossible to do locally, and moreover $\epsilon$-ebit of entanglement is left at the end of the protocol. Such a protocol we will call $(\delta,\epsilon)$ entanglement catalytic protocol and $(\delta-\epsilon)$ quantifies the amount of entanglement consumed to accomplish the given LSM task.

Recall that given $1$-ebit of entanglement as additional resource, the two-qubit Bell basis can be distinguished perfectly. One of the party teleports his/her part of the unknown Bell state to the other party who then performs the Bell basis measurement to identify the state. Furthermore, it is known that $1$-ebit entanglement is the necessary resource required for perfect discrimination of the $2$-qubit Bell basis \cite{Ghosh01}. Coming to the question of entanglement assisted marking of the set $\mathcal{B}_4$, we obtain the following result. 
\begin{proposition}\label{prop3}
There exists a $(2,1)$ entanglement catalytic perfect protocol for LSM of the set $\mathcal{B}_4$. 
\end{proposition}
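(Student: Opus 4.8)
The plan is to exhibit an explicit protocol that consumes $2$-ebit of shared entanglement, perfectly marks the four Bell states, and returns $1$-ebit at the end, thereby certifying the $(2,1)$ catalytic property. By Proposition \ref{prop1} the set $\mathcal{B}_4$ is locally unmarkable, so any successful protocol must genuinely exploit the supplied entanglement; the content of the claim is that $2$-ebit suffices and that $1$-ebit can be recovered, so the net cost is only $1$-ebit. First I would recall the single-state fact noted just above the statement: given $1$-ebit, one party can teleport an unknown Bell half to the other, who performs a Bell measurement and learns the label exactly. The naive idea of marking each of the four states by teleporting one half uses up the entanglement and destroys the catalytic structure, so the key is to reuse resources cleverly.

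The protocol I would build runs as follows. The parties are given the four unknown states $\ket{b_{i_1}},\dots,\ket{b_{i_4}}$ (a permutation of the Bell basis) together with $2$-ebit of catalyst, i.e.\ two auxiliary maximally entangled pairs shared between Alice and Bob. I would use these two auxiliary ebits to teleport Alice's halves of two of the four unknown states over to Bob's side; Bob then holds complete two-qubit systems for those two states and can perform Bell measurements to read off two of the labels. Because the four labels are distinct and exhaust the Bell basis, identifying two of them narrows the remaining two labels to a known pair. The remaining two unknown states are still shared (one half with each party), and crucially the correlations carried by Bell states are themselves a resource: after the measurements the marked, now-classically-known states can be discarded or, better, their entanglement harvested. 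The main work is to show that after identifying all four indices one can locally regenerate or retain enough entanglement that exactly $1$-ebit of fresh maximally entangled state survives between the parties, so that $\epsilon=1$.

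The step I expect to be the main obstacle is the accounting that yields $\epsilon=1$ rather than merely showing markability with some leftover entanglement. Marking \emph{per se} is easy once $2$-ebit is available, but establishing the precise catalytic figure requires tracking, at each branch of the protocol, the maximally entangled content of every subsystem: the consumed teleportation channels, the entanglement liberated from the two directly-measured Bell states, and the entanglement still held in the two as-yet-unmeasured shared states whose labels have been deduced. Since every $\ket{b_i}$ is itself $1$-ebit and a known Bell state can be converted into a standard $\ket{\phi^+}$ by a local Pauli correction, the bookkeeping should close: two teleportations nominally spend $2$-ebit, but the four marked Bell pairs contribute $4$-ebit back, of which the protocol can retain one clean ebit after discarding the classically-known remainder. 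I would organize the argument by conditioning on the outcome of the first two Bell measurements, verify that in every branch the recoverable entanglement is the same, and thereby conclude a deterministic $(2,1)$ catalytic protocol; the detailed conditional entanglement balance is exactly where the care is required, and I would defer the branch-by-branch verification to an appendix analogous to Appendix \ref{appendix}.
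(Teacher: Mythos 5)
Your opening move — spend the two catalyst ebits on teleporting Alice's halves of two of the unknown states so that Bob can Bell-measure them and learn two indices — is exactly the paper's first step. But there are two genuine problems in what follows. First, after two indices are known, the remaining two states are only narrowed to a known \emph{pair} of possibilities; that is not yet a marking. You never say how the third state is actually identified. The paper closes this gap by invoking the result of Walgate \emph{et al.} \cite{Walgate00}: since the third state is one of only two known orthogonal possibilities, it can be distinguished by LOCC with no further entanglement, and the fourth index then follows by elimination. This step is not optional bookkeeping — it is the reason the protocol terminates — and it is also the reason the third state's ebit is sacrificed, which feeds directly into the final count.

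Second, your entanglement accounting is wrong in a way that cannot be repaired by more careful branch-by-branch verification. You claim the ``four marked Bell pairs contribute $4$-ebit back'' and that known states can be ``locally regenerated.'' Neither holds: after teleportation both halves of the first two states sit in Bob's laboratory, so their entanglement is no longer shared between Alice and Bob and is destroyed by his Bell measurements; the third state's entanglement is consumed by the local distinguishing protocol; and a Bell state shared between distant parties can never be created or regenerated by LOCC, however well its identity is known. The only entanglement that survives is the single ebit of the fourth state, which was never touched because its label was inferred by elimination — a known Bell state that can be rotated to $\ket{\phi^+}$ by a local Pauli and returned to the supplier. That is where the $\epsilon=1$ in the $(2,1)$ figure comes from, not from harvesting the measured pairs.
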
  
\begin{proof}
LSM of the set $\mathcal{B}_4$ is equivalent to LSD of the set $\mathcal{B}_{\mathcal{P}[\{4\}]}$ containing states of the form $\ket{b_p}\otimes\ket{b_q}\otimes\ket{b_r}\otimes\ket{b_s}$ with $p,q,r,s\in\{1,\cdots,4\}~\&~p, q, r, s$ are distinct. Let some supplier provides two EPR states for discriminating the set $\mathcal{B}_{\mathcal{P}[\{4\}]}$. Using the teleportation protocol Alice and Bob can know any of the two indices among $p,q,r,s$. Say they identify the indices $p$ and $q$. Then the value of $r$ has only two possibilities and the result of Walgate {\it et al.} ensures that this value can be known exactly under LOCC \cite{Walgate00}. While determining the value of $r$, the entanglement of the state $\ket{b_r}$ gets destroyed. However, at the end of the protocol, entanglement of the state $\ket{b_s}$ remains intact and its identity is also known. Therefore, $1$-ebit of entanglement can be returned back to the supplier. So the protocol consumes $1$-ebit of entanglement in catalytic sense.
\end{proof}
Once again we are not sure about optimality of the protocol in Proposition \ref{prop3} in terms of resource consumption, and leave the question open here for further research. One can obtain a more exotic example of entanglement catalytic local marking phenomenon. To this aim we first prove the following result. 
\begin{proposition}\label{prop4}
Any three Bell states of the two-qubit system is unmarkable under one-way LOCC. 
\end{proposition}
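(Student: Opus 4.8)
The plan is to reduce the one-way LOCC marking problem to a statement about a single local POVM and then obstruct that POVM by representation theory. By the reformulation behind Theorem~\ref{theo1} (which merely regroups subsystems and therefore preserves the direction of communication), marking three fixed Bell states $\ket{b_1},\ket{b_2},\ket{b_3}$ is equivalent to one-way distinguishing the $3!=6$ permuted product states $\ket{\Psi_\sigma}:=\ket{b_{\sigma(1)}}\otimes\ket{b_{\sigma(2)}}\otimes\ket{b_{\sigma(3)}}$, $\sigma\in S_3$, regarded in $\mathbb{C}^8_A\otimes\mathbb{C}^8_B$ after collecting Alice's three qubits together and Bob's three together. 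Writing each Bell state as a Pauli acting on one half of the canonical maximally entangled state $\ket{\Phi}$, one has $\ket{\Psi_\sigma}=(U_\sigma\otimes I)\ket{\Phi}$ with $U_\sigma=P_{\sigma(1)}\otimes P_{\sigma(2)}\otimes P_{\sigma(3)}$ a permutation of three fixed Paulis, so all six are maximally entangled in $\mathbb{C}^8\otimes\mathbb{C}^8$.

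First I would set up the criterion with Alice measuring first. Every $\ket{\Psi_\sigma}$ has the maximally mixed marginal $I/8$ on Alice, so her outcome statistics are independent of $\sigma$; her measurement conveys no information and the whole burden falls on Bob. For a local POVM $\{E_k\}$ on Alice, the (unnormalised) state left with Bob on outcome $k$ is $\rho^B_{k,\sigma}\propto\overline{U_\sigma^\dagger E_k U_\sigma}$, and perfect marking forces, for every $k$, that the six states $\{\rho^B_{k,\sigma}\}_{\sigma}$ be mutually orthogonal — equivalently $\Tr[E_k V_{\sigma\tau} E_k V_{\sigma\tau}^\dagger]=0$ for all $\sigma\neq\tau$, where $V_{\sigma\tau}:=U_\sigma U_\tau^\dagger$.

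The next step is a clean rank bound: for each fixed $k$ the six mutually orthogonal $\rho^B_{k,\sigma}$ each have rank equal to $\mathrm{rank}(E_k)$ and sit inside Bob's $8$-dimensional space, so $6\,\mathrm{rank}(E_k)\le 8$, forcing every nonzero $E_k$ to be rank one, $E_k\propto\ketbra{e_k}{e_k}$. The orthogonality condition then collapses to $\langle e_k|W|e_k\rangle=0$ for every ``difference'' operator $W\in\{V_{\sigma\tau}\}$ (phases being irrelevant). I would then identify these operators: in each tensor slot $V_{\sigma\tau}$ is a product of two distinct Paulis, hence a nonidentity Pauli exactly where $\sigma,\tau$ disagree and $I$ elsewhere, and a short symplectic computation shows that — independently of which three of the four Bell states were chosen — the fifteen $\{V_{\sigma\tau}\}$ are precisely the fifteen nontrivial elements of the $16$-element Pauli subgroup $\mathcal{V}$ of three-qubit Paulis having even $X$-weight and even $Z$-weight.

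The main obstacle, and the heart of the argument, is to show that no unit vector can kill all fifteen expectations. Here I would use that the symplectic form is nondegenerate on $\mathcal{V}$, so $\mathcal{V}$ with its Pauli phases is of extraspecial type and carries a unique faithful irreducible representation of dimension $4$; concretely $\mathbb{C}^8\cong\mathbb{C}^4_{\mathrm{irr}}\otimes\mathbb{C}^2_{\mathrm{mult}}$ with each $W$ acting as $\widetilde{W}\otimes I_2$, where $\{I_4\}\cup\{\widetilde{W}\}$ is an orthogonal (nice-error) basis of $M_4(\mathbb{C})$. Consequently $\langle e_k|W|e_k\rangle=\Tr[\widetilde{W}\,\rho_{\mathrm{irr}}]=0$ for all nonidentity $W$ forces the reduced state $\rho_{\mathrm{irr}}=\Tr_{\mathrm{mult}}\ketbra{e_k}{e_k}$ to equal $\tfrac14 I_4$. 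But a pure state of $\mathbb{C}^4\otimes\mathbb{C}^2$ has reduced rank at most $2$ on the $\mathbb{C}^4$ factor, so it can never be the full-rank $\tfrac14 I_4$ — a contradiction. Hence no admissible rank-one $E_k$ exists, so no one-way POVM can satisfy $\sum_k E_k=I$, and marking fails with Alice measuring first; the case of Bob measuring first is identical after exchanging the two halves of $\ket{\Phi}$. I expect the representation-theoretic identification of $\mathcal{V}$ and the ensuing rank contradiction to be the crux, while the reformulation and the rank-one reduction are routine.
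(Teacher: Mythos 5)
Your proposal is correct, and it is worth noting that it goes further than the paper's own argument at the decisive step. Both proofs begin the same way: the $3!=6$ permuted product states are regrouped as maximally entangled states $(U_\sigma\otimes I)\ket{\Phi}$ in $\mathbb{C}^8\otimes\mathbb{C}^8$, and one-way markability reduces to the existence of a unit vector $\ket{e}$ with $\langle e|U_\sigma U_\tau^\dagger|e\rangle=0$ for all $\sigma\neq\tau$ (the paper imports this criterion from the Ghosh--Kar--Roy--Sarkar/Bandyopadhyay--Ghosh--Kar line of work, while you rederive it from scratch via the transpose trick together with a rank bound --- six mutually orthogonal residual states of rank $\operatorname{rank}(E_k)$ in $\mathbb{C}^8$ force $E_k$ to be rank one --- which makes your version self-contained). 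The real divergence is what happens next: the paper declares that the non-existence of such a $\ket{\chi}\in\mathbb{C}^8$ ``boils down to a numerical exercise,'' whereas you prove it analytically. Your identification of the fifteen difference operators $V_{\sigma\tau}$ as the nontrivial elements of the $16$-element subgroup $\mathcal{V}$ of three-qubit Paulis whose slot-labels sum to zero in $\mathbb{F}_2^2$ is correct (and correctly independent of which three Bell states are chosen, since each $U_\sigma$ carries the same total label $P_1+P_2+P_3$); the symplectic form is indeed nondegenerate on $\mathcal{V}$, the commutant is generated by $X^{\otimes3}$ and $Z^{\otimes3}$, and the resulting factorization $\mathbb{C}^8\cong\mathbb{C}^4\otimes\mathbb{C}^2$ with $\mathcal{V}$ acting as a nice error basis on the first factor forces the reduced state of any admissible $\ket{e}$ to be $\tfrac14 I_4$, which is impossible for a pure state with a two-dimensional complementary factor. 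What your approach buys is a rigorous, computer-free proof of exactly the claim the paper leaves to numerics, plus a structural explanation (the rank-$4$ nondegenerate symplectic subgroup) of \emph{why} the obstruction arises; what it costs is the extra representation-theoretic machinery, which the paper avoids by citation and computation.
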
 
\begin{proof}
Consider a set of maximally entangled states $\{\ket{\phi_{k}}\}_{k=1}^{n}\subset \mathbb{C}^{d}\otimes\mathbb{C}^{d}$ that can be obtained from $\ket{\phi_{0}(d)}:=\frac{1}{\sqrt{d}}\sum_{i=0}^{d-1}\ket{ii}$ by applying some unitary on one part of the system, {\it i.e.,} $\ket{\phi_{k}}=(U_{k}\otimes\mathbb{I})\ket{\phi_{0}(d)}$. According to a  criterion, conjectured in \cite{Ghosh04} and subsequently derived in \cite{Bandyopadhyay11(0)}, the states can be discriminated under one-way LOCC \textit{if and only if} there exists a $\ket{\psi}\in\mathbb{C}^{d}$, such that $\bra{\psi_{i}}\psi_{j}\rangle=\delta_{ij}, \forall i,j\in\{1,2,\cdots, n\}$, where $\ket{\psi_{k}}=U_{k}\ket{\psi}$. In our case, without loss of any generality we can consider $\mathcal{B}_3\equiv\{\ket{b_{1}}:=\ket{\phi^{+}}, \ket{b_{2}}:=\ket{\phi^{-}},\ket{b_{3}}=\ket{\psi^{+}}\}$, such that  $\mathcal{B}_{\mathcal{P}[\{3\}]}\equiv\{\ket{\phi_{k}}\}_{k=1}^{6}\subset \mathbb{C}^{8}\otimes\mathbb{C}^{8}$, where
\footnotesize
\begin{align*}
\ket{b_{1}b_{2}b_{3}}:=\ket{\phi_{1}}=(U_1\otimes\mathbb{I}_{8})\ket{\phi_{0}(8)}=([\mathbb{I}_{2}\otimes\sigma_{z}\otimes\sigma_{x}]\otimes\mathbb{I}_{8})\ket{\phi_{0}(8)},\\
\ket{b_{1}b_{3}b_{2}}:=\ket{\phi_{2}}=(U_2\otimes\mathbb{I}_{8})\ket{\phi_{0}(8)}=([\mathbb{I}_{2}\otimes\sigma_{x}\otimes\sigma_{z}]\otimes\mathbb{I}_{8})\ket{\phi_{0}(8)},\\
\ket{b_{2}b_{3}b_{1}}:=\ket{\phi_{3}}=(U_3\otimes\mathbb{I}_{8})\ket{\phi_{0}(8)}=([\sigma_{z}\otimes\sigma_{x}\otimes\mathbb{I}_{2}]\otimes\mathbb{I}_{8})\ket{\phi_{0}(8)},\\
\ket{b_{2}b_{1}b_{3}}:=\ket{\phi_{4}}=(U_4\otimes\mathbb{I}_{8})\ket{\phi_{0}(8)}=([\sigma_{z}\otimes\mathbb{I}_{2}\otimes\sigma_{x}]\otimes\mathbb{I}_{8})\ket{\phi_{0}(8)},\\
\ket{b_{3}b_{1}b_{2}}:=\ket{\phi_{5}}=(U_5\otimes\mathbb{I}_{8})\ket{\phi_{0}(8)}=([\sigma_{x}\otimes\mathbb{I}_{2}\otimes\sigma_{z}]\otimes\mathbb{I}_{8})\ket{\phi_{0}(8)},\\
\ket{b_{3}b_{2}b_{1}}:=\ket{\phi_{6}}=(U_6\otimes\mathbb{I}_{8})\ket{\phi_{0}(8)}=([\sigma_{x}\otimes\sigma_{z}\otimes\mathbb{I}_{2}]\otimes\mathbb{I}_{8})\ket{\phi_{0}(8)}.
\end{align*}
\normalsize
Here $\ket{\phi_{0}(8)}:=\ket{\phi^{+}}^{\otimes 3}\in\mathbb{C}^{8}\otimes\mathbb{C}^{8}$. Now, consider an arbitrary quantum state $\ket{\chi}:=\sum_{i=0}^{7}a_{i}\ket{i}\in \mathbb{C}^{8}$, where $a_{i}\in\mathbb{C}~\&~\sum_{i=0}^{7}|a_{i}|^{2}=1$. Thus the condition for distinguishability of the set $\mathcal{B}_{\mathcal{P}[\{3\}]}$ under one-way LOCC turns out to be $\langle\psi_i|\psi_j\rangle=\delta_{ij}$, where $\ket{\psi_k}:=U_k\ket{\chi}$. It boils down to a numerical exercise to show that the aforesaid condition is not satisfied for any $\ket{\chi}\in\mathbb{C}^8$.
\end{proof}
While Proposition {\ref{prop4} } proves impossibility of LSM of the set $\mathcal{B}_3$ under one-way LOCC, we have the following stronger result if we consider $2$-LSM of the set. 
\begin{corollary}\label{coro3}
Perfect $2$-LSM of the set $\mathcal{B}_3$ is not possible even under two-way LOCC protocol. 
\end{corollary}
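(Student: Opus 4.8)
The plan is to reuse the reduction-to-LSD strategy of Proposition \ref{prop1}, but applied to the $2$-LSM task rather than full LSM. By the reformulation established in Corollary \ref{coro1}, perfect $2$-LSM of $\mathcal{B}_3$ is equivalent to perfect LSD of the set obtained by selecting $2$ of the $3$ Bell states in all $\binom{3}{2}=3$ ways and listing each selected pair in both orders. This yields $\frac{3!}{1!}=6$ tensor-product states of the form $\ket{b_i}\otimes\ket{b_j}$ with $i\neq j$ and $i,j\in\{1,2,3\}$.

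First I would record that each of these $6$ states lives in $\mathbb{C}^4_A\otimes\mathbb{C}^4_B$, where Alice holds the pair $A_1A_2$ and Bob holds $B_1B_2$. Next I would check that every $\ket{b_i}\otimes\ket{b_j}$ is maximally entangled across the $A$ versus $B$ cut: each Bell state is maximally entangled on $\mathbb{C}^2\otimes\mathbb{C}^2$ with reduced state $\mathbb{I}_2/2$, so the reduced state of the product on Alice's side is $\mathbb{I}_2/2\otimes\mathbb{I}_2/2=\mathbb{I}_4/4$, confirming maximal entanglement in $\mathbb{C}^4\otimes\mathbb{C}^4$. Pairwise orthogonality is immediate since the ordered index pairs $(i,j)$ are all distinct and the Bell basis is orthonormal.

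With these facts in hand, the conclusion follows from the same bound used in Proposition \ref{prop1}: $n$ pairwise orthogonal maximally entangled states in $\mathbb{C}^d\otimes\mathbb{C}^d$ cannot be distinguished perfectly by LOCC whenever $n>d$ \cite{Hayashi06}. Here $n=6$ and $d=4$, so $n>d$, and because the cited criterion rules out all LOCC (in particular two-way) protocols, the reformulated LSD problem is unsolvable. Hence $2$-LSM of $\mathcal{B}_3$ is impossible even under two-way LOCC.

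The argument is essentially a counting estimate, so I do not expect a serious obstacle; the one point demanding care is getting the reduction exactly right---verifying that the $2$-LSM task produces precisely $6$ states (ordered, not $3$ unordered pairs, since the positional identities must be marked), that they genuinely reside in dimension $d=4$ rather than a larger space, and that they remain maximally entangled across the bipartite cut. Once these bookkeeping checks are settled, the strict inequality $6>4$ closes the proof via the Hayashi {\it et al.} bound.
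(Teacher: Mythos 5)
Your proposal is correct and follows essentially the same route as the paper: the paper's one-line proof likewise reduces perfect $2$-LSM of $\mathcal{B}_3$ to perfect LSD of the $6$ ordered pairs $\ket{b_i}\otimes\ket{b_j}$, which are pairwise orthogonal maximally entangled states in $\mathbb{C}^4\otimes\mathbb{C}^4$, and invokes the Hayashi \emph{et al.} bound with $n=6>d=4$. Your write-up merely makes explicit the bookkeeping (ordered pairs, maximal entanglement across the $A$ versus $B$ cut) that the paper leaves implicit.
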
 
Proof follows from the fact that $6$ pairwise maximally entangled states in $(\mathbb{C}^4)^{\otimes2}$ are not locally distinguishable. Moving on to the question of entanglement assisted discrimination of the set $\mathcal{B}_3$, it has been established that perfect discrimination requires $1$-ebit entanglement \cite{Bandyopadhyay15}. Regarding entanglement assisted marking of $\mathcal{B}_3$ we have the following result. 
\begin{proposition}\label{prop5}
There exists a $(1,1)$ entanglement catalytic protocol for perfect LSM of the set $\mathcal{B}_3$. 
\end{proposition}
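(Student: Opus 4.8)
The plan is to adapt the catalytic strategy of Proposition~\ref{prop3} to the three-state set, exploiting the fact that once two of the three indices are fixed the third is forced, so that one Bell pair can be left entirely untouched. As before, I reformulate LSM of $\mathcal{B}_3$ as LSD of $\mathcal{B}_{\mathcal{P}[\{3\}]}$, whose elements have the form $\ket{b_p}\otimes\ket{b_q}\otimes\ket{b_r}$ with $p,q,r\in\{1,2,3\}$ distinct, and assume the supplier provides a single EPR pair, i.e.\ $1$-ebit.

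First I would spend the supplied ebit to learn one index: Alice teleports her half of the first Bell pair to Bob, who then holds both qubits of that pair and reads off $p$ by a Bell-basis measurement. This exhausts the catalyst and destroys the first pair, but leaves the other two pairs untouched, with their joint identity now restricted to the two orderings of $\{q,r\}=\{1,2,3\}\setminus\{p\}$. Next I would identify a second index by purely local measurements: the content of the second slot is one of two orthogonal Bell states, and any two orthogonal Bell states are locally distinguishable (a special case of the Walgate \emph{et al.}\ result \cite{Walgate00}; concretely, measuring both qubits in the computational or the $\{\ket{+},\ket{-}\}$ basis converts a bit- or phase-difference into correlated-versus-anticorrelated outcomes), so $q$ is fixed exactly. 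This collapses the second pair, but it is the only remaining pair that needs to be disturbed.

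With $p$ and $q$ known, the last index $r$ is forced to be the single remaining element of $\{1,2,3\}$, so the third pair $\ket{b_r}$ is marked without ever being measured. It therefore survives as a maximally entangled two-qubit state; since $r$ is known, a local Pauli correction rotates it back to $\ket{\phi^+}$, and this $1$-ebit is returned to the supplier. Hence $1$-ebit is consumed and $1$-ebit is returned, yielding the claimed $(1,1)$ catalytic protocol while all three indices are correctly marked.

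The step I expect to require the most care is the final one: I must argue that the third pair is genuinely acted on by nothing but (at most) a local unitary conditioned on the inferred value of $r$, so that its full $1$-ebit is preserved, and that knowledge of $r$ alone suffices to restore a standard EPR form — which holds precisely because each Bell state is a local-unitary image of $\ket{\phi^+}$. The remaining ingredients are the standard teleportation protocol and the elementary local distinguishability of a single pair of Bell states, both of which compose straightforwardly into an LOCC protocol.
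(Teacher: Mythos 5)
Your proposal is correct and follows essentially the same route as the paper's proof: spend the supplied ebit on teleportation plus a Bell measurement to fix one index, locally distinguish the two remaining orthogonal possibilities in the second slot \emph{\`a la} Walgate \emph{et al.}, and infer the third index for free so that its untouched pair supplies the returned ebit. Your added remark about the conditional Pauli correction restoring $\ket{\phi^+}$ is a harmless refinement the paper omits.
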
 
\begin{proof}
Let Alice and Bob have $1$-ebit entanglement (received from some supplier) to distinguish the state $\ket{b_p}\otimes\ket{b_q}\otimes\ket{b_r}$, where $p,q,r\in\{1,2,3\}~ \&~p, q, r$ are distinct. Using the teleportation scheme they can identify one of the indices (say) $p$. Then, using the method of Walgate {\it et al.} \cite{Walgate00}, they identify the remaining two indices. At the end of this protocol $1$-ebit entanglement remains with Alice and Bob which they can return to the supplier. So, in a catalytic sense, the protocol consumes $0$-ebit of entanglement.    
\end{proof}
Note that the protocol in Proposition \ref{prop5} involves two-way CC. If the teleportation step is from Alice to Bob and thus requires CC from Alice to Bob, then the Walgate step requires CC from Bob to Alice. The question remains open whether there exists some local protocol with two-way CC that perfectly marks the set $\mathcal{B}_3$ without involving entanglement even in the catalytic sense.   

\section{ Discussions}
To further highlight the implication of the results from the previous section, a few comments are in order. Although both the problems of LSM and LSD stem from a common notion of state identification, the present work strives to point out a subtle difference between them. To elaborate this difference one can consider the following three-party information theoretic task.

Let us suppose three parties Alice, Bob and Charlie are spatially separated. Charlie shares quantum transmission lines with both Alice and Bob, but Alice and Bob are restricted to classical communication between themselves only. Charlie would like to communicate a classical message to both Alice and Bob. But to do that he is provided with an ensemble of $n$ orthogonal bi-partite states of local dimension $d$ which are not locally distinguishable. A justification for communicating in this way is to avoid the message being decoded by non-communicating eavesdroppers between Charlie-Alice and Charlie-Bob.

Now Charlie can provide Alice and Bob multiple copies of the unknown state from the ensemble, so that they can perform perfect LSD. Let us suppose $k$ copies are necessary for perfect LSD. Thus Charlie could communicate to Alice and Bob $\log~n$ bits by sending $k$ qudits, i.e. $\frac{\log~n}{k}$ bits per qudit.

Alternatively Charlie can provide Alice and Bob states from the ensemble corresponding to LSM task, i.e. an ensemble of size $\log~n!$. Possibility of perfect LSM of this ensemble under LOCC will result in a communication of $\log~n!$ bits by sending $n$ qudits, i.e. $\frac{\log~n!}{n}$ bits per qudit.  

To compare the average communication per qudit, let us consider the ensemble in Theorem \ref{theo2} of the main text. The ensemble $\mathcal{X}_4$ of $4$ orthogonal states with local dimension $4$ is given to Charlie. This ensemble does not allow perfect LSD (according to Theorem \ref{theo2}) but $2$ copies of the unknown state is sufficient for perfect LSD. So the average communication per ququad is $\frac{\log~4}{2}=1$ bit. On the other hand perfect LSM of this ensemble (as in Theorem 2) implies average communication per ququad to be $\frac{\log~4!}{4}=\frac{\log~24}{4}=\frac{3+\log~3}{4}$ bits which is greater than the average communication for the protocol based on multi-copy LSD. In this sense, LSM is more economical over the conventional multi-copy LSD.

Proposition \ref{prop4} is also interesting from a different perspective. It is known that any set of $d+1$ mutually orthogonal $d\otimes d$ maximally entangled states is locally indistinguishable \cite{Hayashi06}. But answer to the same question for smaller sets $(<d+1)$ is known only in a few cases.  Although the result of Walgate {\it et al.} ensures local distinguishability of any two maximally entangled states in $2\otimes2$ and later Nathanson proved that any three mutually orthogonal $3\otimes3$ maximally entangled states are locally distinguishable \cite{Nathanson05}, the authors in \cite{Ghosh04,Yu12,Singal17} provide examples of $4$ maximally entangled states in $4\otimes4$ that are not local distinguishable. In Ref.\cite{Bandyopadhyay11(0)} one can find an example of $4$ maximally entangled states in $5\otimes5$ as well as an example of $5$ maximally entangled states in $6\otimes6$ that cannot be perfectly distinguished under one-way LOCC. In a similar spirit, the set $\mathcal{B}_{\mathcal{P}[\{3\}]}$ constitutes an example of $6$ maximally entangled states in $8\otimes8$ that cannot be distinguished under one-way LOCC. 

\section{Conclusions}
We have proposed a class of novel discrimination tasks, namely the $m$-LSM task, that goes beyond the much explored task of local state discrimination. The present study unravels several curious and intricate features of the proposed task.  

Although Lemma \ref{lemma1}, Corollary \ref{coro1}-\ref{coro2}, and Theorem \ref{theo1}-\ref{theo2} unveil some general  features of local state marking task and  Proposition \ref{prop1}-\ref{prop5} report some interesting consequences by considering specific set of states, the present work leaves open a number of important questions and possibilities for further study. In the following we summarize some of those. First, it is important to resolve the question of optimal resource consumption for local state marking task with and without catalysts as mentioned in the discussions after Theorem \ref{theo2} \& Proposition \ref{prop3}, respectively. Second, all the ensembles considered in the present work consist of bipartite entangled states. Except Corollary \ref{coro2}, the present work does not provide much insight for the local state marking of ensembles containing only product states. Does there exist such a product ensemble that cannot be marked locally? If yes, would it imply a stronger notion of {\it nonlocality without entanglement} \cite{Bennett99}? In recent past, this phenomena of {\it nonlocality without entanglement} has been studied in the generalized probabilistic theory framework \cite{Bhattacharya20}. It might be interesting to extend the study of LSM in this framework. Third, in the same spirit of multipartite LSD \cite{Hayashi06}, exploring LSM for multipartite systems might unveil new features of LOCC as well as of multipartite entanglement. Finally, local indistinguishability has also been shown to have practical implications in cryptographic primitives such as data hiding and secret sharing. It would be quite interesting to find such novel applications for the LSM task introduced here.

{\bf Acknowledgment:} R.K.P. acknowledges support from the CSIR Project No. 09/997(0079)/2020-EMR-I. T.G. was supported by the Hong Kong Research Grant Council through Grant No. 17300918 and through the Senior Research Fellowship Scheme SRFS2021-7S02. S.S.B. acknowledges partial support by the Foundation for Polish Science (IRAP project, ICTQT, Contract No. MAB/2018/5, co-financed by EU within Smart Growth Operational Programme). M.A. and M.B. acknowledge support through the research grant of INSPIRE Faculty fellowship from the Department of Science and Technology, Government of India. M.B. acknowledges funding from the National Mission in Interdisciplinary Cyber-Physical systems from the Department of Science and Technology through the I-HUB Quantum Technology Foundation (Grant No. I-HUB/PDF/2021-22/008) and the start-up research grant from SERB, Department of Science and Technology (Grant No. SRG/2021/000267).
\onecolumngrid
\appendix
\section{Detailed proof of Theorem \ref{theo2}}\label{appendix}
\begin{proof}
Here we show that the set $\mathcal{X}_4\equiv\{\ket{\chi_i}\}_{i=1}^4\subset\mathbb{C}^4_A\otimes\mathbb{C}^4_B$ allows perfect LSM ( more particularly, $4$-LSM); where 
\begin{align*}
\ket{\chi_1}&:=\ket{\phi^+}_{A_1B_1}\otimes\ket{\phi^+}_{A_2B_2},~\ket{\chi_2}:=\ket{\phi^-}_{A_1B_1}\otimes\ket{\phi^-}_{A_2B_2},\\
\ket{\chi_3}&:=\ket{\psi^+}_{A_1B_1}\otimes\ket{\phi^-}_{A_2B_2},~\ket{\chi_4}:=\ket{\psi^-}_{A_1B_1}\otimes\ket{\phi^-}_{A_2B_2},\\
\mbox{with}&~~\ket{\phi^\pm}:=\frac{\ket{00}\pm\ket{11}}{\sqrt{2}},~~~\ket{\psi^\pm}:=\frac{\ket{01}\pm\ket{10}}{\sqrt{2}}.
\end{align*}
The $A_1,A_2$ subsystems are with Alice while $B_1,B_2$ are with Bob. As already mentioned, the part of the state $\ket{\chi_i}$ indexed with $A_1B_1$ will be called the first part and the part with index $A_2B_2$ will be called the second part. The set $\mathcal{X}_4$ can be thought of as the union of two disjoint sets of states 
\begin{align*}
\mathcal{X}_4=G_1\cup G_2~~&\&~~G_1\cap G_2=\emptyset,\\
\mbox{where},~~G_1 := \{\ket{\chi_1},\ket{\chi_2}\},&~~G_2 := \{\ket{\chi_3},\ket{\chi_4}\}. 
\end{align*}
The LSM task can be considered as identifying the indices $p,q,r,s\in\{1,\cdots,4\}$ in the state $\ket{\Sigma}:=\ket{\chi_p}\otimes\ket{\chi_q}\otimes\ket{\chi_r}\otimes\ket{\chi_s}\in\left(\mathbb{C}^4_A\otimes\mathbb{C}^4_B\right)^{\otimes4}$ locally, where $p,q, r, s$ are distinct. Note that the state $\ket{\Sigma}$ is a composition (tensor product) of four different states and we will call $\ket{\chi_p}$ the first state, $\ket{\chi_q}$ the second state and so on (of course, the indices $p,q,\cdots$ are not known and the aim is to identify them locally). The local marking strategy of Alice and Bob goes as follows:
\begin{figure}[t!]
\centering
\begin{picture}(120,440)
\put(-190,0){\includegraphics[width=1.0\textwidth]{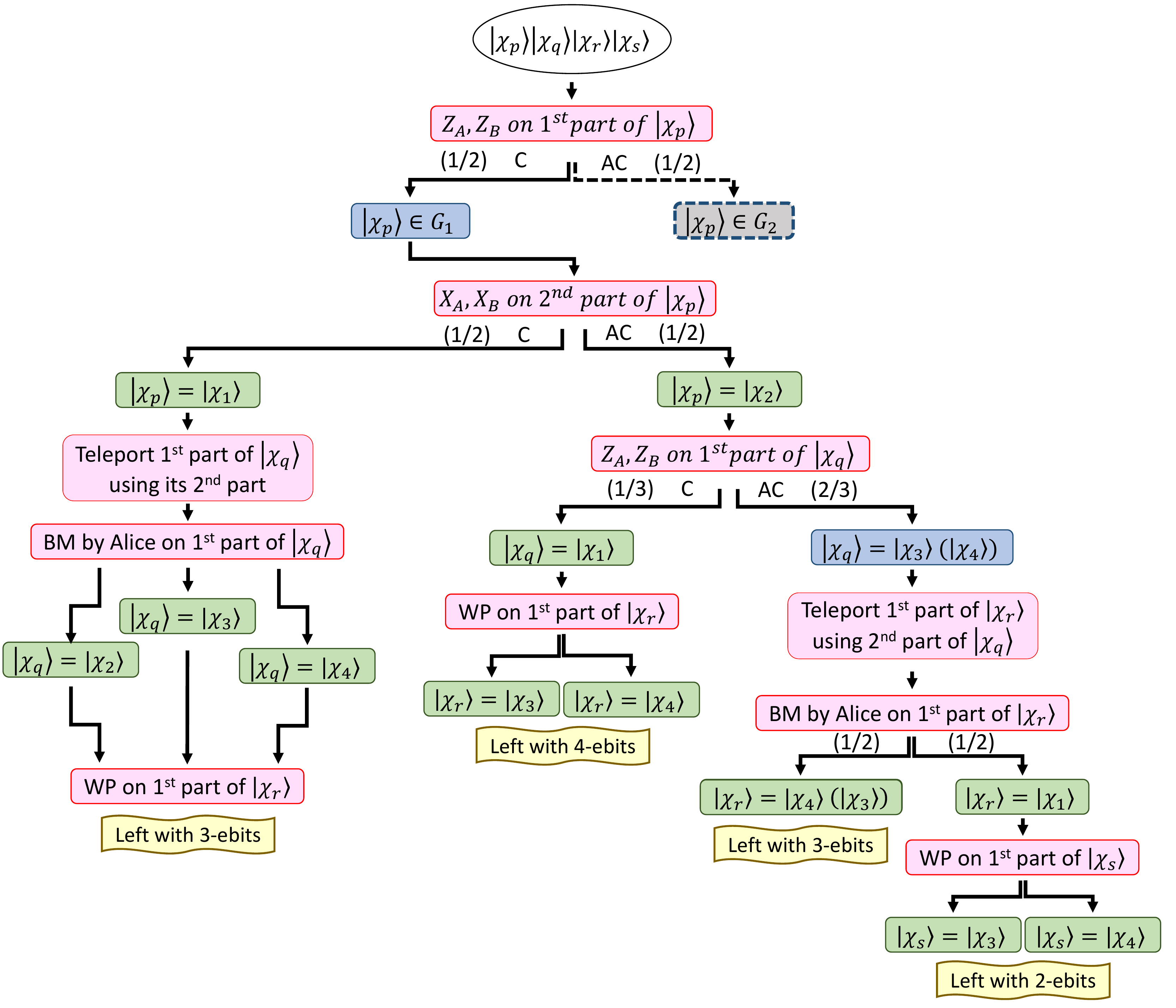}}
\put(105,323){See Figure \ref{fig2}}
\put(-32,382){Step-\ref{Word:step1}}
\put(-32,305){Step-\ref{Word:step2}}
\put(-194,237){Step-\ref{Word:step3c1}}
\put(-199,227){[Case-I]}
\put(197,242){Step-\ref{Word:step3c2}}
\put(193,232){[Case-II]}
\put(-190,100){Step-\ref{Word:step4c1}}
\put(-195,90){[Case-I]}
\put(270,168){Step-\ref{Word:step4c2}}
\put(265,158){[Case-II]}
\end{picture}
\caption{(Color online) Flow chart of the protocol if correlated outcomes are obtained in Step-1. The number written in each branch indicates the probability of occurrence of that branch. The average amount of entanglement left in this case is $\left[\frac{1}{2}\times3+\frac{1}{2}\left\{\frac{1}{3}\times4+\frac{2}{3}\left(\frac{1}{2}\times3+\frac{1}{2}\times2\right)\right\}\right]=3$ ebits.}\label{fig1}
\end{figure}
	
{\bf Step-}\setword{{\bf 1}}{Word:step1}{\bf:} Both Alice and Bob perform the Pauli-Z measurement on the first part of the first state ({\it i.e.,} the state $\ket{\chi_p}$). If they obtain correlated (C) outcomes, {\it i.e.,} If Alice and Bob both obtain the same outcome, then they conclude that $\ket{\chi_p}\in G_1$, whereas anti-correlated (AC) outcomes imply $\ket{\chi_p}\in G_2$. Depending on the results obtained in Step-1 they determine their protocol for the next step. For instance, if they obtain correlated outcomes then their protocol is discussed below. 
\begin{figure}[t!]
\centering
\begin{picture}(120,530)
\put(-190,0){\includegraphics[width=1.0\textwidth]{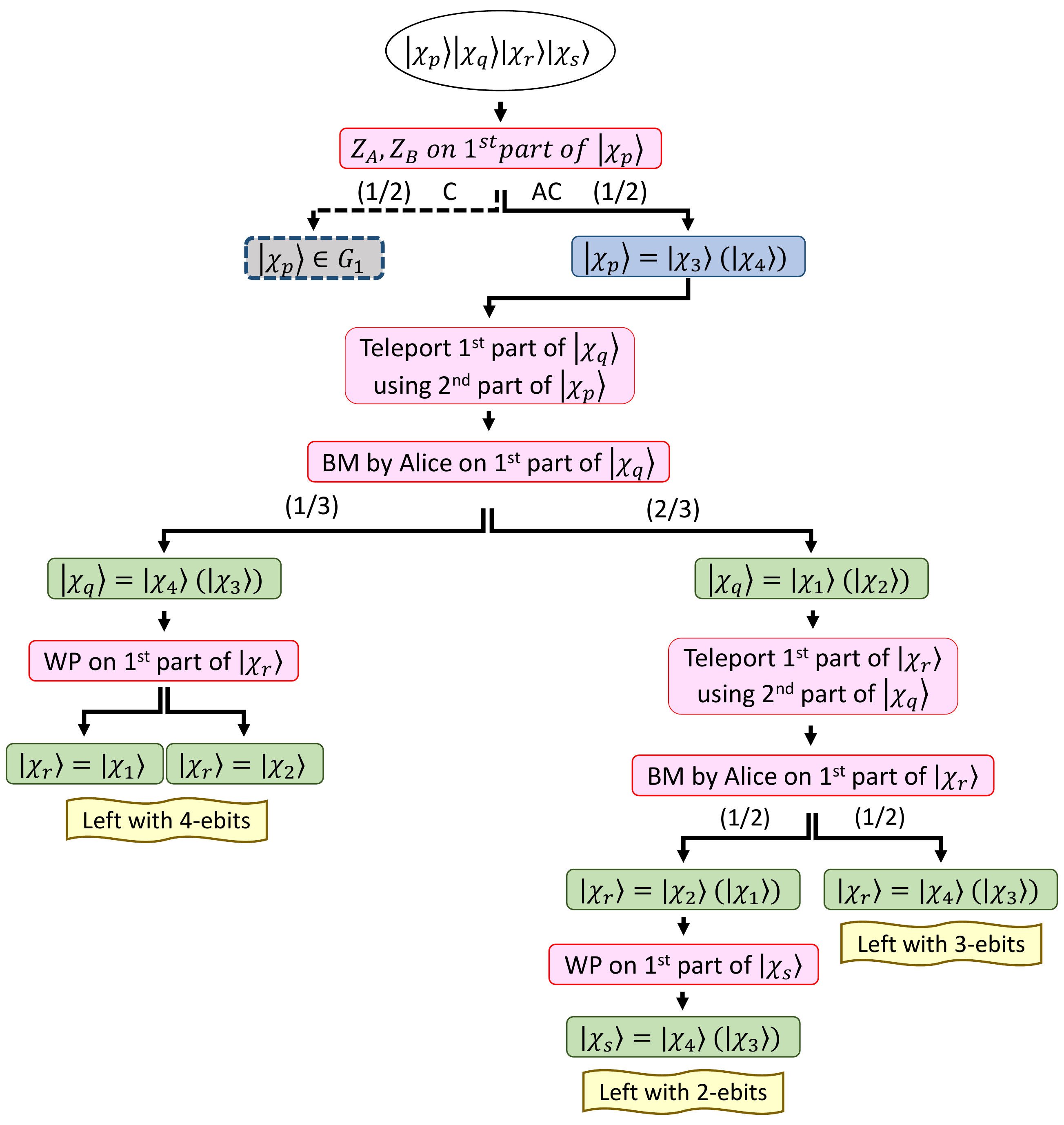}}
\put(-62,393){See Figure \ref{fig1}}
\end{picture}
\caption{(Color online) Flow chart of the protocol if anti-correlated outcomes are obtained in Step-1. The average amount of entanglement left in this case is $\left[\frac{1}{3}\times4+\frac{2}{3}\left\{\frac{1}{2}\times2+\frac{1}{2}\times3\right\}\right]=3$ ebits.}\label{fig2}
\end{figure}
	
{\bf Step-}\setword{{\bf 2}}{Word:step2}{\bf:} Knowing that $\ket{\chi_p}\in G_1$, both Alice and Bob perform Pauli-X measurement on the second part of $\ket{\chi_p}$. Correlated outcomes imply that the first state is $\ket{\chi_1}$ ({\it i.e.,} $p=1$), else it is $\ket{\chi_2}$ ({\it i.e.,} $p=2$). Accordingly, two different branches open up at the next step.  
	
{\bf Step-}\setword{{\bf 3}}{Word:step3c1} {\bf :[Case-I]} $p=1$ in Step-2 implies that the second part of all the states $\ket{\chi_q},\ket{\chi_r}~\&~\ket{\chi_s}$ is $\ket{\phi^-}$. Using the second part of the second state ({\it i.e.,} $\ket{\chi_q}$) Alice and Bob follow the teleportation protocol (TP) to prepare the first part of $\ket{\chi_q}$ at Alice's laboratory. Alice now performs the Bell basis measurement (BM) on the first part of $\ket{\chi_q}$ and depending upon the measurement outcome marks the state exactly. 

{\bf Step-}\setword{{\bf 4}}{Word:step4c1} {\bf :[Case-I]} Since two states $\ket{\chi_p}$ and $\ket{\chi_q}$ are marked exactly (in this case $p=1$ and $q=2~or~3~or~4$), the result of Walgate {\it et al.} \cite{Walgate00} allows us to mark the state $\ket{\chi_r}$ by a local protocol on the first part of the state (In the flow charts of Figure \ref{fig1} \& \ref{fig2} we will call it the Walgate Protocol and denote it as WP). The remaining state $\ket{\chi_s}$ is immediately marked as the set $\mathcal{X}_4$ is known. For the sake of completeness we list the different possibilities and the corresponding Walgate Protocols:
\begin{itemize}
\item $p=1$ (in Step-2) and $q=2$ (in Step-3  [Case- I]): Both Alice and Bob perform the Pauli-X measurement on the first part of $\ket{\chi_r}$. Correlated outcomes imply $r=3$ and $s=4$. Anti-correlated outcomes imply $r=4$ and $s=3$.
\item $p=1$ (in Step-2) and $q=3$ (in Step-3 [Case- I]): Both Alice and Bob perform the Pauli-Z measurement on the first part of $\ket{\chi_r}$. Correlated outcomes imply $r=2$ and $s=4$. Anti-correlated outcomes imply $r=4$ and $s=2$.
\item $p=1$ (in Step-2) and $q=4$ (in Step-3 [Case- I]): Both Alice and Bob perform the Pauli-Z measurement on the first part of $\ket{\chi_r}$. Correlated outcomes imply $r=2$ and $s=3$. Anti-correlated outcomes imply $r=3$ and $s=2$.
\end{itemize}
Note that the entanglement of $\ket{\chi_p}$, $\ket{\chi_q}$, and the first part of $\ket{\chi_r}$ gets destroyed in the protocol, whereas the entanglement of $\ket{\chi_s}$ and the second part of $\ket{\chi_r}$ remains intact. So, whatever the outcome of BM at Step-3, the protocol ends with $3$-ebit entanglement that can be used as a resource.

{\bf Step-}\setword{{\bf 3}}{Word:step3c2} {\bf :[Case-II]} Let Step-2 yield the conclusion that $p=2$. Then, both Alice and Bob perform the Pauli-Z measurement on the first part of the second state ({\it i.e.,} the state $\ket{\chi_q}$).
\begin{itemize}
\item If correlated outcomes are obtained then $q=1$.
\item If anti-correlated outcomes are obtained then $q=3$ or $4$.
\end{itemize}

{\bf Step-}\setword{{\bf 4}}{Word:step4c2} {\bf :[Case-II]}  If correlated outcome is obtained in Step-3 [Case-II], then we have $p=2$ and $q=1$. Again, the result of Walgate {\it et al.} ensures that local marking of $\ket{\chi_r}$ is possible by a local protocol on the the first part of the state and accordingly the remaining state $\ket{\chi_s}$ is also marked. This leaves us with $4$-ebit of  entanglement at the end of the protocol -- $1$-ebit each in $\ket{\chi_q}$ and $\ket{\chi_r}$, and $2$-ebit in $\ket{\chi_s}$.
	
If anti-correlated outcome is obtained in Step-3 [Case-II] then Alice and Bob know that the second part of $\ket{\chi_q}$ is $\ket{\phi^-}$. Utilizing this $\ket{\phi^-}$ they teleport and prepare the first part of $\ket{\chi_r}$ at Alice's laboratory. Alice performs the Bell basis measurement (BM) on the first part of $\ket{\chi_r}$ and marks the state exactly. 
\begin{itemize}
\item If $\ket{\chi_r}$ is identified as $\ket{\chi_4}$ then we have $p=2,q=3, r=4,s=1$. If $\ket{\chi_r}$ is identified as $\ket{\chi_3}$ then we have $p=2,q=4, r=3,s=1$. In both theses cases we are left with $3$-ebit of entanglement.
\item If the state $\ket{\chi_r}$ is identified as $\ket{\chi_1}$, then we have $p=2$ and $r=1$. WP allows us to mark the state $\ket{\chi_s}$ by a local protocol on its first part. Therefore we have  either $p=2,q=4,r=1,s=3$ or $p=2,q=3,r=1,s=4$. Both these cases leave us with $2$-ebit of entanglement.
\end{itemize}
So far, we have discussed the protocol if we obtain correlated outcomes in Step-1. The protocol is summarized in the flow-chart shown in Figure \ref{fig1}. However, to complete the proof we need to analyze the case if anti-correlated outcomes are obtained in Step-1. The corresponding flow-chart is shown in Figure \ref{fig2}. From the flow charts it straightforwardly follows that on an average $3$-ebit $\left[\frac{1}{2}(3+3)\right]$ of entanglement is left at the end of the protocol. 
\end{proof}

\twocolumngrid


\begin{thebibliography}{99}
\bibitem{Shannon48} C. E. Shannon; A mathematical theory of communication,
\href{https://ieeexplore.ieee.org/document/6773024/citations#citations}{Bell Syst. Tech. J. {\bf 27}, 379 (1948)}.

\bibitem{Lehmann05} E.L. Lehmann and J.P. Romano; Testing Statistical Hypotheses, Springer (2005).  
	
\bibitem{Wiesner83} S. Wiesner; Conjugate coding,
\href{https://doi.org/10.1145/1008908.1008920}{ACM SIGACT News {\bf 15}, 78 (1983)}.


\bibitem{Bennett84} C. H. Bennett and G. Brassard, in Proceedings of IEEE
International Conference on Computers, Systems, and
Signal Processing (IEEE, New York, 1984).

\bibitem{Ekert91} A. K. Ekert; Quantum cryptography based on Bell’s theorem,
\href{https://doi.org/10.1103/PhysRevLett.67.661}{Phys. Rev. Lett. {\bf 67}, 661 (1991)}.

\bibitem{Bennett92} C. H. Bennett, G. Brassard, and N. D. Mermin; Quantum cryptography without Bell's theorem, 
\href{https://doi.org/10.1103/PhysRevLett.68.557}{Phys. Rev. Lett. {\bf 68}, 557 (1992)}.

\bibitem{Bennett92(1)} C. H. Bennett and S. J. Wiesner; Communication via one- and two-particle operators on Einstein-Podolsky-Rosen states,
\href{https://doi.org/10.1103/PhysRevLett.69.2881}{Phys. Rev. Lett. {\bf 69}, 2881 (1992)}. 
\bibitem{Bennett93} C. H. Bennett, G. Brassard, C. Crépeau, R. Jozsa, A. Peres, and W. K. Wootters; Teleporting an unknown quantum state via dual classical and Einstein-Podolsky-Rosen channels,
\href{https://doi.org/10.1103/PhysRevLett.70.1895}{Phys. Rev. Lett. {\bf 70}, 1895 (1993)}.

\bibitem{Wootters82} W. Wootters and W Zurek; A Single Quantum Cannot be Cloned,
\href{https://doi.org/10.1038/299802a0}{Nature {\bf 299}, 802 (1982)}. 

\bibitem{Barnum96} H. Barnum, C. M. Caves, C. A. Fuchs, R. Jozsa, and B. Schumacher;  Noncommuting Mixed States Cannot Be Broadcast,
\href{https://doi.org/10.1103/PhysRevLett.76.2818}{Phys. Rev. Lett. {\bf 76}, 2818  (1996)}.

\bibitem{Barnum07} H. Barnum, J. Barrett, M. Leifer, and A. Wilce; Generalized No-Broadcasting Theorem,
\href{https://doi.org/10.1103/PhysRevLett.99.240501}{Phys. Rev. Lett. {\bf 99}, 240501 (2007)}. 

\bibitem{Banik19} M. Banik, S. Saha, T. Guha, S. Agrawal, S. S. Bhattacharya, A. Roy, and A. S. Majumdar; Constraining the state space in any physical theory with the principle of information symmetry,
\href{https://doi.org/10.1103/PhysRevA.100.060101}{Phys. Rev. A {\bf 100}, 060101(R) (2019)}.

\bibitem{Self1} State space of a classical system having finite number of perfectly distinguishable states is described by some simplex embedded in some $\mathbb{R}^d$ where extreme points of the simplex correspond to the pure states. On the other hand, distributions on phase space represents mixed state while delta distributions, {\it i.e.,} the phase space points correspond to pure states that are unaccountably many in numbers but perfectly distinguishable at least in principle.

\bibitem{Bennett99} C. H. Bennett, D. P. DiVincenzo, C. A. Fuchs, T. Mor, E. Rains, P. W. Shor, J. A. Smolin, and W. K. Wootters; Quantum nonlocality without entanglement,
\href{https://doi.org/10.1103/PhysRevA.59.1070}{Phys. Rev. A {\bf 59}, 1070 (1999)}.

\bibitem{Walgate00} J. Walgate, A. J. Short, L. Hardy, and V. Vedral; Local Distinguishability of Multipartite Orthogonal Quantum States,
\href{https://doi.org/10.1103/PhysRevLett.85.4972}{Phys. Rev. Lett. {\bf 85}, 4972 (2000)}.

\bibitem{Ghosh01} S. Ghosh, G. Kar, A. Roy, A. Sen(De), and U. Sen; Distinguishability of Bell States, \href{https://doi.org/10.1103/PhysRevLett.87.277902}{Phys. Rev. Lett. {\bf 87}, 277902 (2001)}.

\bibitem{Walgate02} J. Walgate and L. Hardy; Nonlocality, Asymmetry, and Distinguishing Bipartite States,
\href{https://doi.org/10.1103/PhysRevLett.89.147901}{Phys. Rev. Lett. {\bf 89}, 147901 (2002)}.

\bibitem{Ghosh04} S. Ghosh, G. Kar, A. Roy, and D. Sarkar; Distinguishability of maximally entangled states,
\href{https://doi.org/10.1103/PhysRevA.70.022304}{Phys. Rev. A {\bf 70}, 022304 (2004)}.

\bibitem{Horodecki03} M. Horodecki, A. Sen(De), U. Sen, and K. Horodecki; Local Indistinguishability: More Nonlocality with Less Entanglement,
\href{https://doi.org/10.1103/PhysRevLett.90.047902}{Phys. Rev. Lett. {\bf 90}, 047902 (2003)}.

\bibitem{Watrous05} J. Watrous; Bipartite Subspaces Having No Bases Distinguishable by Local Operations and Classical Communication,
\href{https://doi.org/10.1103/PhysRevLett.95.080505}{Phys. Rev. Lett. {\bf 95}, 080505 (2005)}.

\bibitem{Hayashi06} M. Hayashi, D. Markham, M. Murao, M. Owari, and S. Virmani; Bounds on Multipartite Entangled Orthogonal State Discrimination Using Local Operations and Classical Communication,
\href{https://doi.org/10.1103/PhysRevLett.96.040501}{Phys. Rev. Lett. {\bf 96}, 040501 (2006)}.

\bibitem{Bennett99(1)} C. H. Bennett, D. P. DiVincenzo, T. Mor, P. W. Shor, J. A. Smolin, and B. M. Terhal; Unextendible Product Bases and Bound Entanglement, 
\href{https://doi.org/10.1103/PhysRevLett.82.5385}{Phys. Rev. Lett. {\bf 82}, 5385 (1999)}.

\bibitem{DiVincenzo03} D. P. DiVincenzo, T. Mor, P. W. Shor, J. A. Smolin, B. M. Terhal; Unextendible Product Bases, Uncompletable Product Bases and Bound Entanglement,
\href{https://doi.org/10.1007/s00220-003-0877-6}{Comm. Math. Phys. {\bf 238}, 379 (2003)}.

\bibitem{Niset06} J. Niset and N. J. Cerf; Multipartite nonlocality without entanglement in many dimensions,
\href{https://doi.org/10.1103/PhysRevA.74.052103}{Phys. Rev. A {\bf 74}, 052103 (2006)}.

\bibitem{Duan07} R. Duan, Y. Feng, Z. Ji, and M. Ying; Distinguishing Arbitrary Multipartite Basis Unambiguously Using Local Operations and Classical Communication,
\href{https://doi.org/10.1103/PhysRevLett.98.230502}{Phys. Rev. Lett. {\bf 98}, 230502 (2007)}.

\bibitem{Calsamiglia10} J. Calsamiglia, J. I. de Vicente, R. Muñoz-Tapia, and E. Bagan; Local Discrimination of Mixed States, 
\href{https://doi.org/10.1103/PhysRevLett.105.080504}{Phys. Rev. Lett. {\bf 105}, 080504 (2010)}.

\bibitem{Bandyopadhyay11} S. Bandyopadhyay; More Nonlocality with Less Purity,
\href{https://doi.org/10.1103/PhysRevLett.106.210402}{Phys. Rev. Lett. {\bf 106}, 210402 (2011)}.

\bibitem{Chitambar14} E. Chitambar, R. Duan, and Min-Hsiu Hsieh; When do Local Operations and Classical Communication Suffice for Two-Qubit State Discrimination? \href{https://doi.org/10.1109/tit.2013.2295356}{IEEE Trans. Inform. Theory {\bf 60}, 1549 (2014)}.

\bibitem{Halder18} S. Halder; Several nonlocal sets of multipartite pure orthogonal product states,
\href{https://doi.org/10.1103/PhysRevA.98.022303}{Phys. Rev. A {\bf 98}, 022303 (2018)}.

\bibitem{Demianowicz18} M. Demianowicz and R. Augusiak; From unextendible product bases to genuinely entangled subspaces,
\href{https://doi.org/10.1103/PhysRevA.98.012313}{Phys. Rev. A {\bf 98}, 012313 (2018)}.

\bibitem{Halder19} S. Halder, M. Banik, S. Agrawal, and S. Bandyopadhyay; Strong Quantum Nonlocality without Entanglement, 
\href{https://doi.org/10.1103/PhysRevLett.122.040403}{Phys. Rev. Lett. {\bf 122}, 040403 (2019)}.

\bibitem{Halder19(1)} S. Halder, M. Banik, and S. Ghosh; Family of bound entangled states on the boundary of the Peres set,
\href{https://doi.org/10.1103/PhysRevA.99.062329}{Phys. Rev. A {\bf 99}, 062329 (2019)}.

\bibitem{Agrawal19} S. Agrawal, S. Halder, and M. Banik; Genuinely entangled subspace with all-encompassing distillable entanglement across every bipartition,
\href{https://doi.org/10.1103/PhysRevA.99.032335}{Phys. Rev. A {\bf 99}, 032335 (2019)}.

\bibitem{Rout19} S. Rout, A. G. Maity, A. Mukherjee, S. Halder, and M. Banik; Genuinely nonlocal product bases: Classification and entanglement-assisted discrimination,
\href{https://doi.org/10.1103/PhysRevA.100.032321}{Phys. Rev. A {\bf 100}, 032321 (2019)}.

\bibitem{Bhattacharya20} S. S. Bhattacharya, S. Saha, T. Guha, and M. Banik; Nonlocality without entanglement: Quantum theory and beyond,
\href{https://doi.org/10.1103/PhysRevResearch.2.012068}{Phys. Rev. Research {\bf 2}, 012068(R) (2020)}.

\bibitem{Banik20} M. Banik, T. Guha, M. Alimuddin, G. Kar, S. Halder, S. S. Bhattacharya; Multicopy Adaptive Local Discrimination: Strongest Possible Two-Qubit Nonlocal Bases,
\href{https://doi.org/10.1103/PhysRevLett.126.210505}{Phys. Rev. Lett. {\bf 126}, 210505 (2021)}

\bibitem{Rout20} S. Rout, A. G. Maity, A. Mukherjee, S. Halder, and M. Banik; Local State Discrimination and Ordering of Multipartite Entangled States,
\href{https://arxiv.org/abs/1910.14308}{arXiv:1910.14308}.

\bibitem{Terhal01} B. M. Terhal, D. P. DiVincenzo, and D. W. Leung;
Hiding bits in Bell states, 
\href{https://doi.org/10.1103/PhysRevLett.86.5807}{Phys. Rev. Lett. {\bf 86}, 5807 (2001)}.

\bibitem{DiVincenzo02} D. P. DiVincenzo, D. W. Leung, and B. M. Terhal; Quantum data hiding, 
\href{https://doi.org/10.1109/18.985948}{IEEE Trans. Inf. Theory {\bf 48}, 580 (2002)}.

\bibitem{Eggeling02} T. Eggeling and R. F. Werner; Hiding Classical Data in Multipartite Quantum States, 
\href{https://doi.org/10.1103/PhysRevLett.89.097905}{Phys. Rev. Lett. {\bf 89}, 097905 (2002)}.

\bibitem{Markham08} D. Markham and B. C. Sanders; Graph states for quantum secret sharing, 
\href{https://doi.org/10.1103/PhysRevA.78.042309}{Phys. Rev. A {\bf 78}, 042309 (2008)}.

\bibitem{Matthews09} W. Matthews, S. Wehner, and A. Winter; Distinguishability of quantum states under restricted families of measurements with an application to quantum data hiding, 
\href{https://doi.org/10.1007/s00220-009-0890-5}{Commun. Math. Phys. {\bf 291}, 813 (2009)}.

\bibitem{Helstrom69} C. W. Helstrom; Quantum Detection and Estimation Theory, 
\href{https://doi.org/10.1007/BF01007479}{J. Stat. Phys. {\bf 1}, 231 (1969)}.

\bibitem{Holevo73} A. S. Holevo; Statistical decision theory for quantum systems, 
\href{https://doi.org/10.1016/0047-259X(73)90028-6}{J. Multivar. Anal. {\bf 3}, 337 (1973)}.

\bibitem{Yuen75} H. Yuen, R. Kennedy, and M. Lax; Optimum testing of multiple hypotheses in quantum detection theory, 
\href{https://doi.org/10.1109/tit.1975.1055351}{IEEE Trans. Inf. Theory {\bf 21}, 125 (1975)}.


\bibitem{Chiribella08} G. Chiribella, G. M. D’Ariano, and P. Perinotti; Quantum Circuit Architecture,
\href{https://doi.org/10.1103/PhysRevLett.101.060401}{Phys. Rev. Lett. {\bf 101}, 060401 (2008)}.

\bibitem{Piani09} M. Piani and J. Watrous; All Entangled States are Useful for Channel Discrimination,
\href{https://doi.org/10.1103/PhysRevLett.102.250501}{Phys. Rev. Lett. {\bf 102}, 250501 (2009)}.

\bibitem{Chiribella12} G. Chiribella; Perfect discrimination of no-signalling channels via quantum superposition of causal structures,
\href{https://doi.org/10.1103/PhysRevA.86.040301}{Phys. Rev. A {\bf 86}, 040301 (2012)}.

\bibitem{Hirche21} C. Hirche; Quantum Network Discrimination,
\href{https://arxiv.org/abs/2103.02404}{arXiv:2103.02404}.

\bibitem{Pirandola19} S. Pirandola, R. Laurenza, C. Lupo, and J. L. Pereira; Fundamental limits to quantum channel discrimination,
\href{https://doi.org/10.1038/s41534-019-0162-y}{npj Quantum Information {\bf 5}, 50 (2019)}. 

\bibitem{Takagi19} R. Takagi, B. Regula, K. Bu, Zi-Wen Liu, and G. Adesso; Operational Advantage of Quantum Resources in Subchannel Discrimination,
\href{https://doi.org/10.1103/PhysRevLett.122.140402}{Phys. Rev. Lett. {\bf 122}, 140402 (2019)}.

\bibitem{Takagi19(1)} R. Takagi and B. Regula; General Resource Theories in Quantum Mechanics and Beyond: Operational Characterization via Discrimination Tasks,
\href{https://doi.org/10.1103/PhysRevX.9.031053}{Phys. Rev. X {\bf 9}, 031053 (2019)}.

\bibitem{Chiribella21} G. Chiribella, M. Banik, S. S. Bhattacharya, T. Guha, M. Alimuddin, A. Roy, S. Saha5, S. Agrawal, and G. Kar; Indefinite causal order enables perfect quantum communication with zero capacity channels,
\href{https://doi.org/10.1088/1367-2630/abe7a0}{New J. Phys. {\bf 23}, 033039 (2021)}

\bibitem{Bhattacharya21} S. S. Bhattacharya, A. G. Maity, T. Guha, G. Chiribella, and M. Banik; Random-Receiver Quantum Communication,
\href{https://doi.org/10.1103/PRXQuantum.2.020350}{PRX Quantum {\bf 2}, 020350 (2021)}.

\bibitem{Self2} While formulation of quantum mechanics assume this tensor product structure as a postulate, in a recent letter \cite{Carcassi21} Carcassi {\it et al.} have derived the tensor product rule from the state postulate and from the measurement postulate starting with a natural definition of a composite system as a set containing the component systems. 

\bibitem{Carcassi21} G. Carcassi, L. Maccone, and C. A. Aidala; Four Postulates of Quantum Mechanics Are Three,
\href{https://doi.org/10.1103/PhysRevLett.126.110402}{Phys. Rev. Lett. {\bf 126}, 110402 (2021)}.

\bibitem{Chitambar14(1)} E. Chitambar, D. Leung, L. Mancinska, M. Ozols, and A. Winter; Everything You Always Wanted to Know About LOCC (But Were Afraid to Ask),
\href{https://doi.org/10.1007/s00220-014-1953-9}{Commun. Math. Phys. {\bf 328}, 303 (2014)}.

\bibitem{Peres91} A. Peres and W. K. Wootters; Optimal detection of quantum information,
\href{https://doi.org/10.1103/PhysRevLett.66.1119}{Phys. Rev. Lett. {\bf 66}, 1119 (1991)}. 

\bibitem{Chitambar13} E. Chitambar and Min-Hsiu Hsieh; Revisiting the optimal detection of quantum information,
\href{https://doi.org/10.1103/PhysRevA.88.020302}{Phys. Rev. A {\bf 88}, 020302(R) (2013)}.

\bibitem{Yu12} N. Yu, R. Duan, and M. Ying; Four Locally Indistinguishable Ququad-Ququad Orthogonal Maximally Entangled States,
\href{https://doi.org/10.1103/PhysRevLett.109.020506}{Phys. Rev. Lett. {\bf 109}, 020506 (2012)}.

\bibitem{Bandyopadhyay11(0)} S. Bandyopadhyay, S. Ghosh, and G. Kar; LOCC distinguishability of unilaterally transformable quantum states,
\href{https://doi.org/10.1088/1367-2630/13/12/123013}{New J. Phys. {\bf 13}, 123013 (2011)}.

\bibitem{Bandyopadhyay15} S. Bandyopadhyay, A. Cosentino, N. Johnston, V. Russo, J. Watrous, N. Yu; Limitations on separable measurements by convex optimization,
\href{https://doi.org/10.1109/tit.2015.2417755}{IEEE Trans. Inf. Theory. {\bf 61}, 3593 (2015)}.

\bibitem{Nathanson05} M. Nathanson; Distinguishing bipartitite orthogonal states using LOCC: Best and worst cases,
\href{https://doi.org/10.1063/1.1914731}{J. Math. Phys.{\bf 46}, 062103 (2005)}.

\bibitem{Singal17} T. Singal, R. Rahaman, S. Ghosh, and G. Kar; Necessary condition for local distinguishability of maximally entangled states: Beyond orthogonality preservation,
\href{https://doi.org/10.1103/PhysRevA.96.042314}{Phys. Rev. A {\bf 96}, 042314 (2017)}.





\end{thebibliography}
\end{document}